\newif\ifRSSB
\newtheorem{remark}{Remark}
\newtheorem{proposition}{Proposition}
\newtheorem{lemma}{Lemma}
\newtheorem{theorem}{Theorem}
\newcommand{\R}{\mathbb{R}}
\title[Reversible Jump PDMP]{Reversible Jump PDMP Samplers for Variable Selection}
\author{Augustin Chevalier, Paul Fearnhead and Matt Sutton\footnote{This research was supported by EPSRC grants EP/R018561 and EP/R034710.}}
\address{Department of Mathematics and Statistics, Lancaster University}
\title{Reversible Jump PDMP Samplers for Variable Selection}
\author{Augustin Chevalier, Paul Fearnhead and Matt Sutton\\ Department of Mathematics and Statistics, Lancaster University \footnote{This research was supported by EPSRC grants EP/R018561 and EP/R034710.}}
\begin{document}

\maketitle

\begin{abstract}
A new class of Markov chain Monte Carlo (MCMC) algorithms, based on simulating piecewise deterministic Markov processes (PDMPs), have recently shown great promise: they are non-reversible, can mix better than standard MCMC algorithms, and can use subsampling ideas to speed up computation in big data scenarios. However, current PDMP samplers can only sample from posterior densities that are differentiable almost everywhere, which precludes their use for model choice. Motivated by variable selection problems, we show how to develop reversible jump PDMP samplers that can jointly explore the discrete space of models and the continuous space of parameters. Our framework is general: it takes any existing PDMP sampler, and adds two types of trans-dimensional moves that allow for the addition or removal of a variable from the model. We show how the rates of these trans-dimensional moves can be calculated so that the sampler has the correct invariant distribution. Simulations show that the new samplers can mix better than standard MCMC algorithms. Our empirical results show they are also more efficient than gradient-based samplers that avoid model choice through use of continuous spike-and-slab priors which replace a point mass at zero for each parameter with a density concentrated around zero.

\end{abstract}

\noindent%
\ifRSSB
\keywords{Bayesian Statistics; Bouncy Particle Sampler; Model Choice; Monte Carlo; Zig Zag Algorithm}
\else
{\it Keywords:} Bayesian Statistics; Bouncy Particle Sampler; Model Choice; Monte Carlo; Zig Zag Algorithm
\fi

\section{Introduction}

There currently is much interest in developing MCMC algorithms based on simulating piecewise deterministic Markov processes (PDMPs). These are continuous time Markov processes that have deterministic dynamics between a set of event times, and the randomness in these processes only comes through the random event times and potentially random transitions at the events \cite[see][for an introduction to PDMPs]{PDMPDavis}. 

The idea of simulating PDMPs to sample from a target distribution of interest originated in statistical physics \cite[]{peters2012rejection,michel2014generalized}, but has recently been proposed as an alternative to standard MCMC to sample from posterior distributions in Bayesian Statistics, with such algorithms as the Bouncy Particle Sampler \cite[]{bouchard2018bouncy} and the ZigZag algorithm \cite[]{bierkens2017piecewise,bierkens2019zig} amongst others \cite[]{vanetti2017piecewise,markovic2018bouncy,wu2020coordinate,michel2020forward,bierkens2020boomerang}. See \cite{fearnhead2018piecewise} for an introduction to this area.

To sample from a density $\pi(\bm{x})$ current PDMP samplers introduce a velocity component, $\bm{v}$, of the same dimension as $\bm{x}$, and have deterministic dynamics that correspond to a constant velocity model. At the random events the velocity component changes. Algorithms differ in terms of the event rate and how the velocity changes at each event, but each has a simple recipe for choosing these so that the resulting PDMP  has $\pi(\bm{x})$ as its invariant distribution. These recipes depend on $\pi(\bm{x})$ through the gradient of $\log \pi (\bm{x})$, which importantly means that $\pi(\bm{x})$ only needs to be known up to proportionality, but also that $\pi(\bm{x})$ needs to be differentiable almost everywhere. The advantages of PDMP samplers are that they are non-reversible, and thus can mix more quickly than standard reversible MCMC algorithms \cite[]{diaconis2000analysis}, and, when sampling from posterior distributions, they can use a small sample of data points at each iteration whilst still targeting the true posterior distribution \cite[]{bierkens2019zig}.

However, the restriction to sampling from densities that are differentiable means that current PDMP samplers cannot be used in model choice problems. The aim of this paper is to address this limitation, with a particular motivation of PDMP samplers that can be used in variable selection problems that are common in, for example, linear regression and general linear regression.  We show how to design efficient PDMP samplers which allow movement between different models.

A simple way to implement PDMP samplers for variable selection problems is to use continuous spike-and-slab priors on the parameters \cite[]{ishwaran2005spike, George93}, which, rather than setting some parameters exactly to 0, have priors that place substantial mass close to 0. With such a prior, the resulting posterior density is differentiable, and existing PDMP samplers can be used. However such an approach has three disadvantages. First, under such a prior it can be hard to interpret the results as we do not formally get posterior probabilities on whether certain variables should be included in the model. Second, they introduce an extra tuning parameter to the prior which governs the shape of the spike of the component. Third, as we show in Section \ref{sec:PDMP}, using PDMP samplers to sample from the resulting posterior can be computationally inefficient: the samplers will need to simulate many events so that the parameters associated to variables that should not be in the model are kept close to 0. 

We demonstrate how to adapt existing PDMP samplers to variable selection problems. %Intuitively the new sampler will behave like the PDMP sampler applied to the continuous spike-and-slab priors in the limit as the spike converges to a point-mass. 
Specifically they evolve as the PDMP sampler when exploring the posterior associated with a given model, but with two additional events: if any parameter value hits 0 the PDMP jumps to the smaller model where the corresponding variable is removed; whilst with some rate there are events that re-introduce variables into the model. We show in Section \ref{sec:RJPDMP} how to calculate the rate and transition for these new types of event so that the sampler has the correct invariant distribution. To calculate these we need different techniques than those used for existing PDMP samplers, as we need to account for the behaviour of the process when parameters hit zero. The techniques we use are most similar to those in \cite{bierkens2018piecewise}, which considers PDMPs with restricted domains. However in that paper the dynamics at the boundary of the domain could be chosen so that the net flow of probability at the boundary is zero; whereas we need to balance the probability flow out of a model which occurs when a parameter hits zero with the flow into the model caused by the events that re-introduce variables.

The approach we present is generic, in that it can take any current PDMP sampler and be used to obtain a version that can be applied to the variable selection problem. We call the new class of samplers reversible jump PDMP samplers, due to the analogy with reversible jump MCMC \cite[]{green1995reversible}. We show how to derive reversible-jump versions of both ZigZag and the Bouncy Particle Sampler in Section \ref{sec:exampleSamplers}, before investigating empirically these algorithms on both logistic regression and robust linear regression models. 

Proofs of all theorems are relegated to the appendix. Code for implementing the new reversible jump PDMP samplers, and for replicating our examples, is available from {\texttt{https://github.com/matt-sutton/rjpdmp}}.

\section{PDMP Samplers and Model Choice}
\label{sec:pdmp-model-choice}

\subsection{Variable Selection}

We will consider model selection problems that arise from variable selection. The general framework is that we have a vector of parameters, $\theta=(\theta_1, ...,\theta_p)$, and each model is characterised by setting some subset of the $\theta_j$s to 0. This is a common setting across linear models, generalised linear models and various extensions. 

To make ideas concrete, consider a linear model
%Bayesian variable selection provides a coherent framework for modeling datasets where a subset of variables are used in the model and there is uncertainty in which subset to use. BVS is commonly used in regression modelling where a response variable $Y$ is modelled as a function of potential predictors $X_1, ..., X_p$ and corresponding covariates $\theta_1, ...,\theta_p$ so our model is
$$
\bm{Y} = \sum_{j=1}^p\bm{X}_j\theta_j + \epsilon
$$
where $\bm{Y}$ is a vector of response variables, each $\bm{X}_j$ is a vector of covariates, and $\epsilon$ is an additive noise vector. When $p$ is large it is common to fit such a model under a sparsity assumption, namely that many of the $\theta_j$s are 0.

In a Bayesian analysis, such a sparsity assumption is encapsulated in our choice of prior on $\bm{\theta}$. To aid interpretation of the variable selection priors it is common to introduce a latent variable $\bm{\gamma} = (\gamma_1,...,\gamma_p)'$ where $\gamma_j = 1$ if the covariate $X_j$ is included in the model, i.e. if $\theta_j\neq0$. We let $|\bm{\gamma}|=\sum \gamma_j$ the number of covariates included in the corresponding model. Indexing $\bm{\theta}_{\bm{\gamma}}$ as the sub-vector of $\bm{\theta}$ with only the selected variables, any prior can be written in a hierarchical form where we have a prior on $\bm{\gamma}$, then conditional on $\bm{\gamma}$ we have a prior on $\bm{\theta}_{\bm{\gamma}}$, and set all remaining entries of $\bm{\theta}$ to 0. 
%\begin{align*}
%\gamma &\sim \pi(\gamma) \\ 
%\theta_{\gamma} \mid \gamma &\sim g_1(\theta_{\gamma})\\
%\theta_{-\gamma}=0
%\end{align*}
%where the included variables are modelled with some multivariate density $g_1$ and there is a prior $\pi(\gamma)$ on the model space.

A special case is where each component of $\theta$ is independent of the others. In which case the prior can be written as
\begin{align*}
\theta_j \sim w_j g_j(\theta_j) + (1-w_j)\delta_0(\theta_j), \qquad j = 1,...,p,
\end{align*}
where $w_j \in (0,1)$ is the prior probability that $\gamma_j=1$, $\delta_0$ is a Dirac measure at zero and $g_j(\theta_j)$ is a distribution that models our prior beliefs for $\theta_j$ conditional on that variable being included in the model. Bayesian approaches to variable selection that put a probability mass on $\theta_j = 0$ in this way will be referred to as Dirac spike and slab methods. Notable examples of these methods include \cite{Mitchell88, Kuo98, Geweke96, Smith96, bottolo2010}.

While this formulation is natural from a modelling perspective it, sampling from the resulting posterior distribution can be challenging, with, for example, MCMC samplers that use gradient information such as Hamiltonian Monte Carlo \cite[]{neal2011mcmc} not being applicable.
To circumvent this issue it is common to use an approximation to this prior which replaces the point mass at 0 with a density that is peaked around 0, such as
\begin{align} \label{eq:ctsprior}
\theta_j \sim w_j \mathcal{N}(0,\tau_j^2) +  (1-w_j) \mathcal{N}(0,\tau_j^2c_j^2), \qquad j = 1,...,p,
\end{align}
where $c_j$ is taken small so that $\mathcal{N}(0,c_j^2\tau_j^2)$ approximates the Dirac spike. We will refer to Bayesian variable selection methods that replace the Dirac in the prior with a continuous approximation as continuous spike and slab methods. This prior was originally proposed in linear regression where it is commonly referred to as the stochastic search variable selection procedure \cite[]{George93}.

\subsection{PDMP Samplers} \label{sec:PDMP}

Piecewise Deterministic Markov Processes (PDMPs) are an emerging class of non-reversible continuous-time samplers. To be consistent with commonly used notation for PDMP samplers, we will consider sampling from a distribution with density $\pi(\bm{x})$ defined on some space $\mathcal{X}$; which is a slight change of notation relative to our regression model of the previous section. Current samplers augment the state to include a velocity vector and sample from a distribution on $E = \mathcal{X}\times\mathcal{V}$. In the following, for $\bm{z}\in E$ we will use the notation $\bm{z}=(\bm{x},\bm{v})$ with $\bm{x}\in\mathcal{X}$ a position and $\bm{v}\in\mathcal{V}$ a velocity. 

The PDMP can be defined by (i) deterministic dynamics between a set of random event times; (ii) the state-dependent rate at which events occur, $\lambda(\bm{z})$; and (iii) a probability distribution for the change in state at each event, with density $q(\bm{z}'|\bm{z})$. We will consider PDMPs whose deterministic trajectories follow a differential equation of the form:
\[
    \frac{\mbox{d}(\bm{x}_t,\bm{v}_t)}{\mbox{d}t} = (\bm{v}_t,\Phi(\bm{x}_t,\bm{v}_t))
\]
with $\Phi: E \rightarrow \mathcal{V}$ a smooth function. This setting contains the usual PDMP samplers such as ZigZag \cite[]{bierkens2019zig}, the Bouncy Particle Sampler \cite[]{bouchard2018bouncy}, or the Coordinate Sampler \cite[]{wu2020coordinate}. 

For example the ZigZag algorithm to simulate from $\pi(\bm{x})$ for $\bm{x}\in \R^p$, would introduce a velocity vector $\bm{v} \in \{-1,1\}^p$, and deterministic dynamics
\[
\frac{\mbox{d}(\bm{x}_t,\bm{v}_t)}{\mbox{d}t} = (\bm{v}_t,0),
\]
which are the dynamics of constant velocity model. Events occur at a rate that depends on the gradient of $\pi(\bm{x})$ in each component of the velocity, and at an event one component of the velocity is switched. The rate at which the $j$th component of the velocity, $\bm{v}^j_t$ is switched is just
\[
\max\left\{0, -\bm{v}^j_t \frac{\partial}{\partial \bm{x}^j} \log \pi(\bm{x}_t) \right\}.
\]
These rates depend on the target distribution just through the gradient of the log of the target -- which importantly means that we need only know the target distribution up to proportionality. 

We can apply current PDMPs, such as ZigZag, to the Bayesian variable selection problem if we use the continuous spike-and-slab prior (\ref{eq:ctsprior}). Realisations of such a sampler are shown in the left two plots of Figure \ref{fig:my_label} as we vary how concentrated the spike distribution is. %(See Section ?? in the  Supplementary Material for details of the sampler). 
For the more concentrated case, the sampler becomes inefficient as it involves many switching events when the state variable is close to 0. 

Intuitively as we make the variance of the spike component of the prior tend to 0 the prior converges to a prior with a point mass at 0; furthermore we can observe the output of our PDMP sampler ``converging" to a process shown in the right-hand plot of Figure \ref{fig:my_label}. Here, rather than the state having periods where it oscillates around 0, it has periods of time where $\theta_j=0$ and thus has many fewer events. 

%This intuition suggests that it should be possible to develop a PDMP sampler that targets the posterior under a prior with point-mass at zero for each $\theta_j$ and such a sampler could be more efficient than one that targets the posterior when we use the spike-and-slab approximation to this prior. It also indicates the dynamics of such a PDMP sampler -- these would be identical to a standard PDMP except that when any $\theta_j$ the variable would be removed; and that there would be additional events that re-introduce removed variables. The following section shows how we can construct such a process and the associated rates so that it will sample from the required posterior distribution.

\begin{figure}[t]
    \centering
    \includegraphics[width = \textwidth]{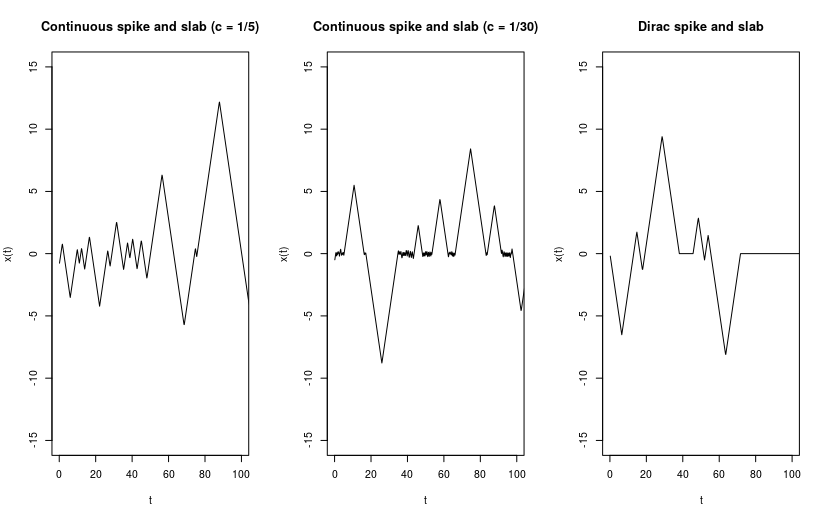}
    \caption{Sample paths of PDMPs implementing the variable selection priors in 1 dimension. The left and centre plots show the trajectories for a continuous spike-and-slab prior $0.5\mathcal{N}(0,\tau^2) + 0.5\mathcal{N}(0,\tau^2c^2)$ where $\tau^2 =16$. As $c$ decreases the spike component in the mixture approaches a Dirac mass. The figure on the right is the limiting process where we set the velocity to zero allowing the variable to stay fixed at zero.}
    \label{fig:my_label}
\end{figure}

\section{Reversible Jump PDMP Samplers} \label{sec:RJPDMP}

%Using the same notations than in section \ref{sec:pdmp-model-choice}, 
Let $\mathcal{M} = (\mathcal{M}_1,\mathcal{M}_2,...)$ be a set of models indexed by a parameter $k$. To each model $\mathcal{M}_k$ correspond a state space $\mathcal{X}^k$ of dimension $p_k$. For the sake of clarity, we will limit ourselves to variable selection, though it is straightforward to apply the results below to more general model selection problems. In our case $\mathcal{X}^k$ has a specific form: \[
    \mathcal{X}^k = \prod_i \mathbb{R}^{\gamma^k_i}
\]
where we abuse notation by using $\mathbb{R}^0 = \{0\}$ and where $(\gamma^k_i)_i$ is a sequence of numbers in $\{0,1\}$ representing whether a variable is enabled for model $\mathcal{M}_k$.
Let $\pi$ be the target posterior probability defined on $\mathcal{X} = \cup_k \mathcal{X}^k$. We further assume that the restriction of $\pi$ to each $\mathcal{X}^k$ has a density, and denote this by $\pi_k(\bm{x})$.
The first ingredient of our sampler is a collection of PDMPs defined for each model. Each PDMP sampler adds a velocity space $\mathcal{V}^k$ to the space $\mathcal{X}^k$ and samples from the space $E_k = \mathcal{X}^k\times\mathcal{V}^k$. %We will abuse notation and assimilate $E_k$ with $\mathcal{M}_k$. 
Finally, for each model $\mathcal{M}_k$, the associated PDMP sampler has an extended infinitesimal generator $\mathcal{A}_k$ \cite[]{PDMPDavis} with invariant distribution proportional to $\nu_k$ with 
\[
\nu_k(\bm{x},\bm{v})=\pi_k(\bm{x}) p_k(\bm{v}|\bm{x}) \qquad (\bm{x},\bm{v}) \in \mathcal{X}^k\times\mathcal{V}^k,
\]
for some set of conditional densities $p_k(\bm{v}|\bm{x})$.

\begin{remark}
    For samplers such as ZigZag, $p_k(\mbox{d}\bm{v}|\bm{x})$ is a measure with support on a discrete set. By choosing $\mathcal{V}^k$ to be the support of $p_k(\mbox{d}\bm{v}|\bm{x})$, $p_k(\mbox{d}\bm{v}|\bm{x})$ still has a density -- and integrals can be interpreted as sums over the support points. This allows us to treat all samplers within the same framework.
\end{remark}

The second ingredient of our reversible jump PDMP sampler is a set of jumps between models. For our case of variable selection, we only allow adding or removing one variable at a time. Hence, let 
\[
\mathcal{T} = \{(i,j) \,|\, \sum_l |\gamma^i_l - \gamma^j_l| = 1 \text{ and } |\gamma^i| > |\gamma^j|\}
\]
be a set of pairs of transitions between models, with these ordered $(i,j)$ such that model $\mathcal{M}_j$ being obtained from $\mathcal{M}_i$ by removing one of the variables in $\mathcal{M}_j$. %For each transition $(i,j) \in \mathcal{T}$, transitions $i\rightarrow j$ will require be done through an instant jump following the idea of section 2. For transition $j \rightarrow i$, a traditional jump associated to a poisson process will be sufficient. 
For transition $i \rightarrow j$, we define an active boundary $\Gamma_{i,j} = \mathcal{X}^j \times \mathcal{V}^i$, a subspace of $E_i$. Each trajectory passing through $\Gamma_{i,j}$ has some probability $p_{i,j}$ of jumping to $E_j$ using a deterministic jump function $g_{i,j}$. We assume that the jump function does not change the position. So, if a trajectory $\bm{z}_{t}$ of our process has a left limit at time $t$, $\bm{z}_{t-}$ in $\Gamma_{i,j}$, then with some probability $p_{i,j}$, $\bm{z}_t=g_{i,j}(\bm{z}_{t-})\in E_j$ with $\bm{x}_t = \bm{x}_{t-}$.% since $g_{i,j}$ leaves the position invariant. }
For transition $j \rightarrow i$, we introduce $\beta_{i,j}(\bm{z})$ a Poisson rate and a jump kernel $Q_{i,j}(\cdot,\bm{z})$ such that if the trajectory is in $E_j$, then with rate $\beta_{i,j}(\bm{z}_{t-})$, $\bm{z}_t$ is drawn from $Q_{i,j}(\cdot,\bm{z}_{t-})\in \Gamma_{i,j}$.  We impose symmetry in the jumps between models so that for any $\bm{z}' \sim Q_{i,j}(\cdot,\bm{z})$, $g_{i,j}(\bm{z}') = \bm{z}$.

\subsection{Invariant distribution}

Let $\nu = \sum_k \nu_k$ be a measure on $\cup_k E_k$. The $\bm{x}$-marginal distribution of $\nu$ is $\pi$ and this section provides conditions on $Q_{i,j}, p_{i,j}$ and  $\beta_{i,j}$ that has to be satisfied if $\nu$ is to be the invariant distribution of the process. To avoid additional technicalities that complicate the proof, we gives results for PDMPs with bounded velocity spaces.

\begin{theorem}
\label{th:extended-gen}
Assume $\mathcal{V}^k$ is bounded for each $k$, and that we can bound the event rate over any compact region; i.e if $K\in E$ is compact then $\max_{\bm{z}\in K} \lambda(\bm{z})<\infty$. 
Suppose a measure $\mu$ is a stationary distribution with density on each $E_i$. Then for each function $f$ in a suitable set $\mathcal{F}$:
\begin{align}
\begin{split}
0
    &= \sum_{i} \int_{E_i} \mathcal{A}_i f \mbox{d}\mu %\\ 
    %&
    + \sum_{(i,j) \in \mathcal{T}} p_{i,j} \int_{\Gamma_{i,j}} \left[ f(g_{i,j}(\bm{z})) - f(\bm{z}) \right] \mu(\bm{z}) |\langle \bm{v},\bm{n}_{i,j}\rangle| \, \mbox{d}\bm{z}\\
    &+ \sum_{(i,j) \in \mathcal{T}} \int_{E_j} \beta_{i,j}(\bm{z}) \left( \int_{g_{i,j}^{-1}(\{\bm{z}\})} [f(\bm{z}') - f(\bm{z})] Q_{i,j}(\bm{z}'|\bm{z}) \, \mbox{d}\bm{z}'\right) \mu(\bm{z}) \mbox{d}\bm{z},
    \end{split}
    \label{eq:th1}
\end{align}
where $\bm{n}_{i,j}$ is the normal to $\Gamma_{i,j}$.\\
\end{theorem}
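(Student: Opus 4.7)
The plan is to apply the standard characterisation of stationarity via the extended generator, namely $\int \mathcal{A} f \, \mbox{d}\mu = 0$ for $f$ in a suitable domain $\mathcal{F}$, and to show that the extended generator $\mathcal{A}$ of the combined reversible jump PDMP on $\cup_k E_k$ splits into three contributions matching the three sums in (\ref{eq:th1}). I would identify the three sources of infinitesimal change for the process and recognise each as one of the displayed terms.

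The within-model contribution is immediate: on each $E_i$, away from the boundaries $\Gamma_{i,j}$, the process evolves as the PDMP with extended generator $\mathcal{A}_i$, so integrating $\mathcal{A}_i f$ against $\mu$ on $E_i$ and summing over $i$ yields the first sum. The reintroduction jumps follow the standard Poisson-rate pattern: from $\bm{z} \in E_j$ the process jumps at rate $\beta_{i,j}(\bm{z})$ to a state drawn from $Q_{i,j}(\cdot|\bm{z})$ supported on $\Gamma_{i,j}$, contributing $\beta_{i,j}(\bm{z}) \int [f(\bm{z}') - f(\bm{z})] Q_{i,j}(\bm{z}'|\bm{z}) \, \mbox{d}\bm{z}'$ to $\mathcal{A} f(\bm{z})$. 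The symmetry assumption $g_{i,j}(\bm{z}') = \bm{z}$ for $\bm{z}' \sim Q_{i,j}(\cdot|\bm{z})$ permits restricting the inner integral to $g_{i,j}^{-1}(\{\bm{z}\})$, matching the third sum once integrated against $\mu$.

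The main obstacle is the deterministic boundary contribution, which is not captured by $\mathcal{A}_i$ on its own. To handle it, I would invoke the probability-flux identity for PDMPs: for any codimension-one manifold $\Gamma \subset E_i$ with unit normal $\bm{n}$, the expected rate at which the deterministic flow $(\bm{v},\Phi)$ transports $\mu$-mass across $\Gamma$ is $\int_\Gamma |\langle \bm{v},\bm{n}\rangle| \, \mu(\bm{z}) \, \mbox{d}\bm{z}$. This can be derived by straightening the flow in a tubular neighbourhood of $\Gamma$, applying the coarea formula, and using boundedness of $\mathcal{V}^i$ together with $\sup_{K} \lambda < \infty$ on compacts $K$ to rule out accumulation of Poisson events or multiple crossings in vanishing time. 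With probability $p_{i,j}$ each crossing of $\Gamma_{i,j}$ triggers the jump $\bm{z} \mapsto g_{i,j}(\bm{z})$, contributing $p_{i,j} [f(g_{i,j}(\bm{z})) - f(\bm{z})] |\langle \bm{v}, \bm{n}_{i,j}\rangle| \mu(\bm{z})$ per boundary element, which integrates to the second sum.

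The remaining work is bookkeeping: combining the three contributions, setting the total equal to zero by stationarity of $\mu$, and verifying that Dynkin's formula applies on $\mathcal{F}$. The assumptions that $\mathcal{V}^k$ is bounded and $\lambda$ is locally bounded ensure integrability and exclude Zeno-like pathologies at the boundaries, so once the flux identity is in hand the identity (\ref{eq:th1}) follows directly.
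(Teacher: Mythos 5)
Your proposal is correct and follows essentially the same route as the paper: the within-model and Poisson-reintroduction terms are read off from the extended generator, and the substantive work is identifying the boundary contribution as a flux term with density $\mu(\bm{z})\,|\langle \bm{v},\bm{n}_{i,j}\rangle|$ on $\Gamma_{i,j}$, using the bounded-velocity and locally-bounded-rate assumptions exactly as you describe to rule out event accumulation near the boundary. The only difference is presentational: rather than asserting $\int \mathcal{A} f \,\mbox{d}\mu = 0$ for the full process outright, the paper anchors the decomposition in Davis's Theorems 34.15 and 34.19 (existence of a boundary measure $\sigma$ together with the stationarity identity including the boundary term $\int_\Gamma (Qf-f)\,\mbox{d}\sigma$) and then identifies $\sigma$ via a $t\to 0$ limiting argument combined with a separate deterministic flux lemma, which is the rigorous counterpart of your tubular-neighbourhood/coarea sketch.
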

The set $\mathcal{F}$ in this theorem is the domain of the generator of our PDMP, and is described precisely in \cite{PDMPDavis}.

To get a directly usable conditions on $\beta_{i,j}, p_{i,j}$ and $Q_{i,j}$,  some additional notation must be introduced.
When jumping from $\bm{z}=(\bm{x},\bm{v}) \in E_j$ to $\bm{z}'=(\bm{x},\bm{v}') \in E_i$, the dimension of the velocity vector needs to be increased by one, but the position is unchanged. Hence $g_{i,j}^{-1}(\bm{z})$ is a one-dimensional manifold, which can be related to a subset, $U$ say, of $\mathbb{R}$ by introducing a function
\[
G_{i,j}: U \times E_j \rightarrow E_i
\]
such that for any $\alpha \in U$, we have $G_{i,j}(\alpha,\bm{z}) \in g_{i,j}^{-1}(\{\bm{z}\})$ and for a fixed $\bm{z} \in E_j$, $\alpha \mapsto G_{i,j}(\alpha,\bm{z})$ is a one to one mapping from $U$ to $g_{i,j}^{-1}(\bm{z})$ \cite[similar ideas are seen in reversible jump MCMC; see][]{green1995reversible}. 
For a given $\bm{z} \in E_j$, it is natural to rewrite the jump kernel $Q_{i,j}$ in terms of $\alpha \in U$, and henceforth we abuse notation and write $Q_{i,j}(\cdot,\bm{z})$ as a density on $U$. That is we can simulate the transition from $E_j$ to $E_i$, by simulating $\alpha \sim Q_{i,j}(\cdot,\bm{z})$ and then setting $\bm{z}'=G_{i,j}(\alpha,\bm{z})$.

Finally, let 
\[
    \nu_{i,j}(\bm{z}) = \nu(\bm{z}) |\langle \bm{v},\bm{n}_{i,j} \rangle|
\]
be an unnormalised density on $\Gamma_{i,j}$ and let $\bar{\nu}_{i,j}$ be the pushforward measure on $E_j$ of $\nu_{i,j}$ by $g_{i,j}$:
\[
    \bar{\nu}_{i,j}(B) = \int_{g_{i,j}^{-1}(B)} \nu_{i,j}(\bm{z}) \mbox{d}\bm{z} \qquad \mbox{ for any } B \subset E_j \mbox{ measurable.}
\]
Informally this is the measure under $\nu_{i,j}$ associated with values of $\bm{x}\in \Gamma_{i,j}$ that would be mapped by $g_{i,j}$ to the set $B$.

We are now in a position to give simple conditions for (\ref{eq:th1}) of Theorem \ref{th:extended-gen} to hold. To do this it is helpful to consider separately the cases where the space of velocities is continuous and the case where it is discrete.
\begin{theorem}
\label{th:invariant}
Assume the space of velocities is continuous.
A sufficient condition for (\ref{eq:th1}) of Theorem \ref{th:extended-gen} to hold for $\nu$ is that, for every $(i,j) \in \mathcal{T}$, the following conditions holds
\begin{align}
    \beta_{i,j}(\bm{z}) &= p_{i,j} \frac{\bar{\nu}_{i,j}(\bm{z})}{\nu(\bm{z}) } \qquad \text{ for } \bm{z}\in E_j \label{eq:thm2.1} \\
    Q_{i,j}(\alpha|\bm{z}) &= \frac{\nu_{i,j}(G_{i,j}(\alpha,\bm{z})) |J_{G_{i,j}}(\alpha,\bm{z})|}{\bar{\nu}_{i,j}(\bm{z})} \qquad \text{ for } \bm{z}\in E_j \text{ and } \alpha \in U \label{eq:thm2.2},
\end{align}
where $J_{G_{i,j}}$ denotes the Jacobian associated with the transformation $G_{i,j}$.
\end{theorem}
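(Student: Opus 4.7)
The plan is to apply Theorem~\ref{th:extended-gen} and verify that the three terms on the right-hand side of (\ref{eq:th1}) sum to zero when $\mu=\nu$ and the two stated conditions hold. The first term, $\sum_i \int_{E_i} \mathcal{A}_i f \, \mathrm{d}\nu_i$, vanishes automatically: by construction each $\nu_i$ is an invariant measure for the within-model PDMP with generator $\mathcal{A}_i$, so $\int \mathcal{A}_i f\, \mathrm{d}\nu_i = 0$ for every $f$ in the domain. The remaining work is to show that the boundary-outflow term (the $p_{i,j}$ integral over $\Gamma_{i,j}$) exactly cancels the jump-inflow term (the $\beta_{i,j}$ integral over $E_j$), summand by summand.

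For a fixed pair $(i,j)\in\mathcal{T}$, I first rewrite the outflow summand by splitting $[f(g_{i,j}(\bm{z}))-f(\bm{z})]$ and applying the pushforward identity
\[
\int_{\Gamma_{i,j}} h(g_{i,j}(\bm{z}))\,\nu_{i,j}(\bm{z})\, \mathrm{d}\bm{z} = \int_{E_j} h(\bm{z})\,\bar{\nu}_{i,j}(\bm{z})\, \mathrm{d}\bm{z},
\]
which follows directly from the definition of $\bar{\nu}_{i,j}$. This expresses the outflow summand as
\[
p_{i,j}\int_{E_j} f(\bm{z})\,\bar{\nu}_{i,j}(\bm{z})\,\mathrm{d}\bm{z} - p_{i,j}\int_{\Gamma_{i,j}} f(\bm{z})\,\nu_{i,j}(\bm{z})\,\mathrm{d}\bm{z}.
\]
Next I substitute conditions (\ref{eq:thm2.1}) and (\ref{eq:thm2.2}) into the inflow summand. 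Condition (\ref{eq:thm2.1}) gives $\beta_{i,j}(\bm{z})\,\nu(\bm{z}) = p_{i,j}\,\bar{\nu}_{i,j}(\bm{z})$, and condition (\ref{eq:thm2.2}) rewrites $Q_{i,j}(\alpha|\bm{z})\,\bar{\nu}_{i,j}(\bm{z})$ as $\nu_{i,j}(G_{i,j}(\alpha,\bm{z}))|J_{G_{i,j}}(\alpha,\bm{z})|$. This turns the inflow summand into
\[
p_{i,j}\int_{E_j}\!\!\int_U \bigl[f(G_{i,j}(\alpha,\bm{z})) - f(\bm{z})\bigr]\,\nu_{i,j}(G_{i,j}(\alpha,\bm{z}))\,|J_{G_{i,j}}(\alpha,\bm{z})|\, \mathrm{d}\alpha\, \mathrm{d}\bm{z}.
\]

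Finally I change variables via $\bm{z}'=G_{i,j}(\alpha,\bm{z})$, which by construction is a bijection from $U\times E_j$ onto $\Gamma_{i,j}$ whose Jacobian is precisely $|J_{G_{i,j}}|$. Using the symmetry $g_{i,j}(G_{i,j}(\alpha,\bm{z}))=\bm{z}$ to rewrite $f(\bm{z}) = f(g_{i,j}(\bm{z}'))$, the inflow summand becomes
\[
p_{i,j}\int_{\Gamma_{i,j}} \bigl[f(\bm{z}') - f(g_{i,j}(\bm{z}'))\bigr]\,\nu_{i,j}(\bm{z}')\,\mathrm{d}\bm{z}',
\]
and a second application of the pushforward identity shows this equals the negative of the outflow summand computed above. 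The two therefore cancel for each $(i,j)$, and (\ref{eq:th1}) holds. The main obstacle I anticipate is the change-of-variables step: one must verify carefully that $G_{i,j}(\cdot,\cdot)$ is indeed a diffeomorphism $U\times E_j \to \Gamma_{i,j}$ with the stated Jacobian, and that the symmetry assumption on $Q_{i,j}$ and $g_{i,j}$ ensures the inverse image really is parameterised by a single coordinate $\alpha$. Everything else is bookkeeping with the pushforward.
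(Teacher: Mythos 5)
Your proposal is correct and follows essentially the same route as the paper's proof: the within-model generator term vanishes by invariance of each $\nu_i$, and the boundary and jump terms are matched using the pushforward identity for $\bar{\nu}_{i,j}$, the parameterisation of $g_{i,j}^{-1}(\{\bm{z}\})$ by $\alpha\in U$, and the change of variables $\bm{z}'=G_{i,j}(\alpha,\bm{z})$ with Jacobian $|J_{G_{i,j}}|$. The only cosmetic difference is that the paper groups terms by the argument of $f$ and shows each bracket vanishes, whereas you cancel the outflow and inflow summands directly; the algebra is identical.
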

%\begin{proof}
%See appendix \ref{sec:proof-th2}.
%\end{proof}
Intuitively, this result can be understood as a detailed balance condition: we balance the probability flow for each jump $\bm{z}\rightarrow \bm{z}'$ from $E_i$ to $E_j$, with that of the reverse jump $\bm{z}'\rightarrow \bm{z}$.

For discrete velocity spaces the result is slightly simpler, as the Jacobian term is not required.
\begin{theorem}
\label{th:invariantDiscrete}
Assume the velocity space is discrete. A sufficient condition for (\ref{eq:th1}) of Theorem \ref{th:extended-gen} to hold for $\nu$ is that, for every $(i,j) \in \mathcal{T}$, the following conditions holds
\begin{align}
    \beta_{i,j}(\bm{z}) &= p_{i,j} \frac{\bar{\nu}_{i,j}(\bm{z})}{\nu(\bm{z}) } \qquad \text{ for } \bm{z}\in E_j \label{eq:thm2.3} \\
    Q_{i,j}(\alpha|\bm{z}) &= \frac{\nu_{i,j}(G_{i,j}(\alpha,\bm{z})) }{\bar{\nu}_{i,j}(\bm{z})} \qquad \text{ for } \bm{z}\in E_j \text{ and } \alpha \in U \label{eq:thm2.4}.
\end{align}
\end{theorem}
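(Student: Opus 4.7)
The plan is to mirror the proof of Theorem 2, with the only essential change being that the continuous change of variables on $\Gamma_{i,j}$ is replaced by a discrete fibre decomposition. First I would invoke Theorem 1 to reduce the claim to verifying equation (\ref{eq:th1}) with $\mu=\nu$. Since each $\nu_k$ is the invariant distribution of the model-$k$ PDMP, the within-model contributions $\int_{E_i}\mathcal{A}_i f\, d\nu$ vanish individually, so I only need to show that, for every $(i,j)\in\mathcal{T}$, the boundary-loss term and the birth term cancel pairwise.

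The key geometric observation is that when $\mathcal{V}^i$ is discrete, the map $g_{i,j}$ removes one discrete velocity coordinate while leaving the position unchanged. Hence $\Gamma_{i,j}=\mathcal{X}^j\times\mathcal{V}^i$ is canonically identified with $E_j\times U$ through $(\bm{z},\alpha)\mapsto G_{i,j}(\alpha,\bm{z})$, where $U$ indexes the possible values of the removed component. Integration over $\Gamma_{i,j}$ then factors as Lebesgue integration over $\mathcal{X}^j$ combined with counting over $U$ and over $\mathcal{V}^j$, with no Jacobian appearing. In particular this gives
\[
\bar{\nu}_{i,j}(\bm{z}) \;=\; \sum_{\alpha\in U}\nu_{i,j}\bigl(G_{i,j}(\alpha,\bm{z})\bigr).
\]

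With this reparameterization, and using $g_{i,j}(G_{i,j}(\alpha,\bm{z}))=\bm{z}$, the boundary term for pair $(i,j)$ becomes
\[
p_{i,j}\int_{E_j}\sum_{\alpha\in U}\bigl[f(\bm{z})-f(G_{i,j}(\alpha,\bm{z}))\bigr]\,\nu_{i,j}\bigl(G_{i,j}(\alpha,\bm{z})\bigr)\,d\bm{z},
\]
while the birth term is
\[
\int_{E_j}\sum_{\alpha\in U}\beta_{i,j}(\bm{z})\,\bigl[f(G_{i,j}(\alpha,\bm{z}))-f(\bm{z})\bigr]\,Q_{i,j}(\alpha|\bm{z})\,\nu(\bm{z})\,d\bm{z}.
\]
Since $f\in\mathcal{F}$ is arbitrary, cancellation of the sum for every $f$ forces the pointwise identity
\[
\beta_{i,j}(\bm{z})\,Q_{i,j}(\alpha|\bm{z})\,\nu(\bm{z}) \;=\; p_{i,j}\,\nu_{i,j}\bigl(G_{i,j}(\alpha,\bm{z})\bigr),
\]
and substituting (\ref{eq:thm2.3}) and (\ref{eq:thm2.4}) makes the $\bar{\nu}_{i,j}$ factors telescope to give exactly this equality.

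The main obstacle is making the discrete fibre decomposition rigorous, i.e. verifying that the surface measure on $\Gamma_{i,j}$ used in Theorem 1 genuinely factors as Lebesgue measure on the position boundary times counting measure on the extra velocity coordinate, with unit Jacobian. Once this bookkeeping is in place the argument is essentially the detailed-balance cancellation of Theorem 2 with the Jacobian replaced by $1$, which explains why (\ref{eq:thm2.4}) carries no $|J_{G_{i,j}}|$ factor.
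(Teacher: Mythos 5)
Your proposal is correct and follows essentially the same route as the paper: reduce to verifying (\ref{eq:th1}) with $\mu=\nu$, note the within-model generator terms vanish by construction, and replace the continuous change of variables of Theorem \ref{th:invariant} by the discrete reindexing of the sum over $\mathcal{V}^i$ via the bijection $(\bm{v},\alpha)\mapsto G_{i,j}(\alpha,(\bm{x},\bm{v}))$, which is exactly why no Jacobian appears. Your identity $\bar{\nu}_{i,j}(\bm{z})=\sum_{\alpha\in U}\nu_{i,j}(G_{i,j}(\alpha,\bm{z}))$ and the telescoping check of the detailed-balance cancellation match the paper's argument.
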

%\begin{proof}
%See appendix \ref{sec:proof-th3}
%\end{proof}

\section{Example Samplers} \label{sec:exampleSamplers}

\subsection{ZigZag Sampler}

We first derive the jump rates and transitions for the ZigZag sampler described in Section \ref{sec:PDMP}. For this sample the velocity of each active component is either 1 or -1, which simplifies the computation.

We choose $g_{i,j}$ to be the projection that sets to $0$ the disabled variable.
Let $(i,j)\in \mathcal{T}$ be a transition. For any $\bm{v}\in \mathcal{V}^i$, we have $|\langle\bm{v},\bm{n}_{i,j}\rangle| = 1$. Therefore on $E_i$, 
\[
    \nu_{i,j}(\bm{z}) = \nu(\bm{z}) = \pi_i(\bm{x}) 2^{-|\gamma_i|}.
\]
For $\bm{z}\in E_j$, since a velocity component in $\{-1,1\}$ is projected to $0$,
\[
    \bar{\nu}_{i,j}(\bm{z}) = 2 \pi_i(\bm{x}) 2^{-|\gamma_i|} = \pi_i(\bm{x})  2^{-|\gamma_j|}.
\]
For the jump to $E_i$ we change the velocity for the component of the model that is added, whilst the other components of the velocity is unchanged. 
Denoting the new velocity of the added component by $\alpha$, we have from Theorem \ref{th:invariantDiscrete} that
\begin{align*}
    \beta_{i,j} &= p_{i,j} \frac{\pi_i(\bm{x})}{\pi_j(\bm{x})}\\
    Q_{i,j}(\alpha|\bm{z}) &= 1/2 \quad \text{for } \alpha \in \{-1,1\}.
\end{align*}
For our variable selection problem, the ratio of the posterior density that appears in $\beta_{i,j}$ will simplify to the ratio of the priors as the likelihood terms are common and cancel. If we have independent priors on the parameters for each variable this term will be a constant, which simplifies the simulation of the events at which we add new variables into our model. This comment applies also to the rates for the Bouncy Particle Sampler which we derive next. 

\subsection{Bouncy Particle Sampler}

We consider two versions of the Bouncy Particle Sampler. The first version has velocities on the unit sphere, so 
\[
    \mathcal{V}^i = \{\bm{v} \in \mathbb{R}^{|\gamma^k|} \, \text{ such that } \, \|\bm{v}\| = 1\}.
\]

Like the ZigZag sampler, the deterministic dynamics are given by a constant velocity model. The event rate for sampling from a density $\pi_k(\bm{x})$ is
\[
\lambda(\bm{z})=\max\{0, -\bm{v} \cdot \nabla_{\bm{x}} \log \pi_k(\bm{x}) \},
\]
with the velocity reflecting in the normal to $\log\pi_k(\bm{x})$ at an event. The Bouncy particle also often has refresh events, at which a completely new velocity is sampled.

Extending the Bouncy Particle Sampler to the variable selection problem requires a more careful analysis than for the ZigZag sampler due to the geometry of the velocity space. For the jump to $E_i$, we will let $\alpha$ denote the new velocity for the component added to the model. To ensure the new velocity lies on the unit sphere, we re-scale the other components of the velocity: so if the old velocity is $\bm{v}$ then the new velocity is $\alpha \bm{n}_{i,j}+\sqrt{1-\alpha^2}\bm{v}$. The following proposition states how to choose the jump rate, $\beta_{i,j}$ and the density for $\alpha$.

\begin{proposition} \label{prop:1}
For the Bouncy Particle Sampler with velocities on the unit sphere, (\ref{eq:thm2.1}) and (\ref{eq:thm2.2}) 
are satisfied if, for all $(i,j)\in \mathcal{T}$ with  $|\gamma^j| > 0$:
\begin{align}
    \beta_{i,j} &= p_{i,j} \frac{\pi_i(\bm{x})}{\pi_j(\bm{x})}  \frac{2 A_{sphere}(|\gamma^j|)}{A_{sphere}(|\gamma^i|)} \frac{1}{|\gamma^j|}\\
    Q_{i,j}(\alpha|\bm{z}) &= \frac{|\alpha| |\gamma^j| \sqrt{1-\alpha^2}^{|\gamma^j|-2}}{2} \text { for } \bm{z} \in E_j \text{ and } \alpha \in (-1,1)
\end{align}
with $A_{sphere}(|\gamma^i|) = \frac{\Gamma(\frac{|\gamma^i|}{2})}{2\pi^{\frac{|\gamma^i|}{2}}}$ the area of the unit sphere of $\mathbb{R}^{|\gamma^i|}$; and if for $|\gamma^j| = 0$, where the Bouncy Particle Sampler and ZigZag are equivalent, we use the ZigZag rates.
\end{proposition}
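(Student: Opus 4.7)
The strategy is to apply Theorem~\ref{th:invariant}, verifying conditions (\ref{eq:thm2.1}) and (\ref{eq:thm2.2}) by direct computation tailored to the spherical velocity geometry of the Bouncy Particle Sampler. First I would fix the transition maps. Writing $\bm{v} \in S^{|\gamma^i|-1}$ orthogonally as $\bm{v} = \alpha\,\bm{n}_{i,j} + \sqrt{1-\alpha^2}\,\tilde{\bm{v}}$ with $\alpha = \langle\bm{v},\bm{n}_{i,j}\rangle \in (-1,1)$ and $\tilde{\bm{v}} \in S^{|\gamma^j|-1}$ orthogonal to $\bm{n}_{i,j}$, the natural forward map is $g_{i,j}(\bm{x},\bm{v}) = (\bm{x},\tilde{\bm{v}})$ and the reparameterization of its fibre is $G_{i,j}(\alpha, (\bm{x}, \tilde{\bm{v}})) = (\bm{x}, \alpha\,\bm{n}_{i,j} + \sqrt{1-\alpha^2}\,\tilde{\bm{v}})$, which satisfies $g_{i,j}\circ G_{i,j}(\alpha,\cdot) = \mathrm{id}$ with $U = (-1,1)$. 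Since the invariant density on $E_i$ is $\pi_i(\bm{x})$ times the uniform density on $S^{|\gamma^i|-1}$, the boundary density on $\Gamma_{i,j}$ satisfies $\nu_{i,j}(G_{i,j}(\alpha,\bm{z})) \propto \pi_i(\bm{x})\,|\alpha|$ using $|\langle \bm{v},\bm{n}_{i,j}\rangle| = |\alpha|$.

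The main obstacle is computing $|J_{G_{i,j}}|$, which is the density of the round surface measure of $S^{|\gamma^i|-1}$ in the latitude coordinates $(\alpha, \tilde{\bm{v}}) \in (-1,1)\times S^{|\gamma^j|-1}$. I would obtain it by pulling back the round metric through $G_{i,j}$: using $\bm{n}_{i,j}\perp\tilde{\bm{v}}$, one checks that $\partial_\alpha \bm{v}$ has squared norm $1/(1-\alpha^2)$ and is orthogonal to the variations induced by motion of $\tilde{\bm{v}}$ along $S^{|\gamma^j|-1}$, which are rescaled by factor $\sqrt{1-\alpha^2}$. The pull-back metric is therefore block diagonal,
\[
\tfrac{1}{1-\alpha^2}\,d\alpha^2 + (1-\alpha^2)\,g_{S^{|\gamma^j|-1}},
\]
and its volume form is $(1-\alpha^2)^{(|\gamma^j|-2)/2}\,d\alpha\,d\sigma_{|\gamma^j|-1}$. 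Hence $|J_{G_{i,j}}(\alpha,\bm{z})| = \sqrt{1-\alpha^2}^{\,|\gamma^j|-2}$. Note that this step requires $|\gamma^j| > 0$: when the lower-dimensional sphere degenerates to a point the latitude decomposition breaks down, which is why the case $|\gamma^j| = 0$ must be deferred to the ZigZag rates.

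It remains to plug in and evaluate. Integrating the boundary density against the Jacobian in $\alpha$,
\[
\int_{-1}^{1} |\alpha|\,(1-\alpha^2)^{(|\gamma^j|-2)/2}\,\mathrm{d}\alpha \;=\; \frac{2}{|\gamma^j|},
\]
by the substitution $u = 1-\alpha^2$, so $\bar{\nu}_{i,j}(\bm{z}) \propto 2\pi_i(\bm{x})/|\gamma^j|$ with the same spherical normalising constant as $\nu$ on $E_i$. Condition (\ref{eq:thm2.1}) then yields $\beta_{i,j}(\bm{z}) = p_{i,j}\,\bar{\nu}_{i,j}(\bm{z})/\nu(\bm{z})$; since on $E_j$ we have $\nu(\bm{z}) \propto \pi_j(\bm{x})$ with the $S^{|\gamma^j|-1}$ normalising constant, collecting the two spherical normalisations reproduces the ratio $2A_{\mathrm{sphere}}(|\gamma^j|) / [A_{\mathrm{sphere}}(|\gamma^i|)\,|\gamma^j|]$ appearing in the statement. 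Finally condition (\ref{eq:thm2.2}) gives
\[
Q_{i,j}(\alpha \mid \bm{z}) \;=\; \frac{|\alpha|\,\sqrt{1-\alpha^2}^{\,|\gamma^j|-2}}{2/|\gamma^j|} \;=\; \frac{|\gamma^j|\,|\alpha|\,\sqrt{1-\alpha^2}^{\,|\gamma^j|-2}}{2},
\]
completing the verification of both hypotheses of Theorem~\ref{th:invariant}.
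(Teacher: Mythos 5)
Your proof is correct and follows essentially the same route as the paper: the same map $G_{i,j}(\alpha,\bm{z}) = (\bm{x}, \alpha\bm{n}_{i,j}+\sqrt{1-\alpha^2}\bm{v})$, the same pushforward computation of $\bar{\nu}_{i,j}$ via the integral $\int_{-1}^{1}|\alpha|(1-\alpha^2)^{(|\gamma^j|-2)/2}\,\mathrm{d}\alpha = 2/|\gamma^j|$, and the same reading-off of $\beta_{i,j}$ and $Q_{i,j}$ from Theorem~\ref{th:invariant}. The only difference is that you explicitly derive the Jacobian $|J_{G_{i,j}}| = \sqrt{1-\alpha^2}^{\,|\gamma^j|-2}$ from the pull-back of the round metric in latitude coordinates, a detail the paper states without derivation.
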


The second version of the Bouncy Particle Sampler has velocities in $\mathbb{R}^{|\gamma^k|}$, with their density being standard Gaussian and independent of $\bm{x}$. The dynamics are as previously. This sampler does not satisfy our requirement that the velocity space is bounded, though the argument behind Theorem \ref{th:extended-gen} can be extended to this sampler, and our results would apply to this sampler if we replaced the Gaussian density by a truncated Gaussian (with any arbitrarily large truncation value).

For the jump to $E_i$, we again let $\alpha$ denote the new velocity for the component added to the model, whilst for this version the other components of the velocity remain unchanged.

\begin{proposition} \label{prop:2}
For the Bouncy Particle Sampler with Gaussian velocities, (\ref{eq:thm2.1}) and (\ref{eq:thm2.2}) 
are satisfied if, for all $(i,j)\in \mathcal{T}$: %with  $|\gamma^j| > 0$:
\begin{align}
    \beta_{i,j} &= p_{i,j} \frac{\pi_i(\bm{x})}{\pi_j(\bm{x})}  \frac{2}{\sqrt{2\pi}} \\
    Q_{i,j}(\alpha|\bm{z}) &= 2 |\alpha| e^{-\frac{1}{2}\alpha^2} \text { for } \bm{z} \in E_j \text{ and } \alpha \in \mathbb{R}.
\end{align}
\end{proposition}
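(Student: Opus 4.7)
The plan is to apply Theorem~\ref{th:invariant} directly: we only need to write $\nu_{i,j}$, its pushforward $\bar{\nu}_{i,j}$, and the Jacobian $|J_{G_{i,j}}|$ in closed form for this sampler, and then substitute into (\ref{eq:thm2.1})--(\ref{eq:thm2.2}).

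First I would pin down the geometry. Letting $l$ denote the index of the variable added in the transition $j\to i$, the boundary $\Gamma_{i,j}\subset E_i$ is the hyperplane $\{x_l=0\}$, so $\bm{n}_{i,j}=\bm{e}_l$ and $|\langle\bm{v},\bm{n}_{i,j}\rangle|=|v_l|$. Under the standard Gaussian velocity law this gives
\[
\nu_{i,j}(\bm{x},\bm{v}) \;=\; \pi_i(\bm{x})\,(2\pi)^{-|\gamma^i|/2}\, e^{-\|\bm{v}\|^2/2}\,|v_l|
\qquad\text{on }\Gamma_{i,j}.
\]
The projection $g_{i,j}$ drops the $l$-th velocity coordinate, so its fibres are one-dimensional and $G_{i,j}(\alpha,(\bm{x},\bm{v}'))$ is the coordinate-insertion placing $\alpha$ at position $l$; in these coordinates $G_{i,j}$ has unit Jacobian.

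Next I would compute the pushforward. Parametrising the fibre by $\alpha\in\mathbb{R}$ and using $\|\bm{v}\|^2=\|\bm{v}'\|^2+\alpha^2$ together with $\int_{\mathbb{R}}|\alpha|e^{-\alpha^2/2}\,\mathrm{d}\alpha=2$ yields
\[
\bar{\nu}_{i,j}(\bm{x},\bm{v}') \;=\; 2\,\pi_i(\bm{x})\,(2\pi)^{-|\gamma^i|/2}\, e^{-\|\bm{v}'\|^2/2}.
\]
Substituting into (\ref{eq:thm2.1}), the $e^{-\|\bm{v}'\|^2/2}$ factor cancels with the Gaussian in $\nu_j$, and the dimension mismatch $|\gamma^i|-|\gamma^j|=1$ contributes a single $(2\pi)^{-1/2}$, producing the stated $\beta_{i,j}$. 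Substituting into (\ref{eq:thm2.2}) with $|J_{G_{i,j}}|=1$, the $\pi_i$, $(2\pi)^{-|\gamma^i|/2}$ and $e^{-\|\bm{v}'\|^2/2}$ factors all cancel and we are left with a density on $\mathbb{R}$ proportional to $|\alpha|e^{-\alpha^2/2}$, which is the claimed $Q_{i,j}$.

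The main technical subtlety, and where I would be most careful, is the interpretation of $|J_{G_{i,j}}|$: because $G_{i,j}:\mathbb{R}\times E_j\to E_i$ formally raises the velocity dimension by one, one must view it as a parametrisation of the one-dimensional fibre $g_{i,j}^{-1}(\bm{z})$ and verify that, in the natural coordinates inherited from $\Gamma_{i,j}$, it is a pure coordinate-insertion with unit Jacobian. Getting this wrong would introduce spurious factors in both $\beta_{i,j}$ and $Q_{i,j}$; once it is nailed down, the remainder of the argument is bookkeeping with Gaussian normalising constants and the elementary integral $\int_{\mathbb{R}}|\alpha|e^{-\alpha^2/2}\,\mathrm{d}\alpha=2$.
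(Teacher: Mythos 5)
Your proposal is correct and follows essentially the same route as the paper's own proof: identify $\bm{n}_{i,j}$, write down $\nu_{i,j}$ and its pushforward via the unit-Jacobian coordinate insertion $G_{i,j}(\alpha,\bm{z})=(\bm{x},\bm{v}+\alpha\bm{n}_{i,j})$, use $\int_{\mathbb{R}}|\alpha|e^{-\alpha^2/2}\,\mathrm{d}\alpha=2$, and substitute into (\ref{eq:thm2.1})--(\ref{eq:thm2.2}). If anything, your ``proportional to $|\alpha|e^{-\alpha^2/2}$'' is the more careful phrasing, since the normalised density on $\mathbb{R}$ is $\tfrac12|\alpha|e^{-\alpha^2/2}$ rather than the constant $2$ printed in the proposition.
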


\section{Simulation Study}

In this section we demonstrate the potential advantage of our new samplers compared to alternative approaches for Bayesian variable selection. To compare between different samplers we consider the Monte Carlo estimates of the posterior probabilities of inclusion, the posterior means for the regression coefficients, and the posterior means conditioned on the model. Similar comparisons methods are used in \cite{Zanella19} for comparing efficiency of MCMC methods on Bayesian variable selection problems. For a given sampler the statistical efficiency is measured by the mean squared error of the sampler denoted by $\sigma_{\text{sampler}}^2$ and is calculated using $R$ runs of the sampler as 
\begin{equation}
\sigma_{\text{sampler}}^2 = \frac{1}{R}\sum_{i=1}^R(\hat{q}_r - q)^2
\label{eq:mse}
\end{equation}
where $\hat{q}_r$ for $r=1,...,R$ are the estimates of a quantity of interest from the $R$ runs, and $q$ is either the exact posterior quantity of interest, if available, otherwise it is the estimate from an independent long run of an MCMC method. In multiple dimensions the statistical efficiency is measured as the median $\sigma^2_{\text{sampler}}$ over all dimensions. To compare the performance of different samplers we also consider a measure of efficiency relative to a reference sampler. If we denote the reference sampler by ref, then we define Relative Statistical Efficiency (RSE) and Relative Efficiency (RE) by
$$
\text{RSE} = \frac{\sigma_{\text{ref}}^2~n_{\text{ref}}}{\sigma_{\text{sampler}}^2~n_{\text{sampler}}}, \qquad \text{RE} = \frac{\sigma_{\text{ref}}^2~t_{\text{ref}}}{\sigma_{\text{sampler}}^2~t_{\text{sampler}}},
$$
where $n_{\text{sampler}}$ and $n_{\text{ref}}$ are the number of iterations of the algorithms and $t_{\text{sampler}}$ and $t_{\text{ref}}$ are the computation times of the algorithms. The RSE measures the relative efficiency of the algorithms per iteration whereas the RE measures the efficiency per second. For interpretation an RSE or RE value of 2 implies that the sampler is 2 times \textit{more efficient} than the reference method. The sensitivity of the methods to the choice of reversible jump parameters $p_{i,j}$ and regular PDMP tuning parameters is explored empirically on a simple model in the Supplementary Material. Based on these results we fixed $p_{i,j} = 0.6$ for all $i$ and $j$ in all reversible jump PDMP samplers and fixed the refreshment rate at 0.1 for the reversible jump BPS methods for the following results.

\subsection{Logistic regression}
First we compare PDMP based samplers with a collapsed Gibbs sampler and a reversible jump sampler on a logistic regression problem. The Gibbs sampler is based on the Polya-Gamma sampler of \cite{polson2013bayesian}: details of this sampler are given in the Supplementary Material. We implemented reversible jump MCMC using the NIMBLE software package \cite[]{NIMBLE, NIMBLER} using independent Normal proposal for selected variables. 
\\
The logistic regression model has a $p$-dimensional regression parameter $\bm{\theta} \in \mathbb{R}^p$, and a binary response $y_i \in \{0,1\}$ which is distributed as
\begin{align*}
P(y_i = 1 | \bm{x}_i, \bm{\theta}) = \frac{\exp(\sum_{j=1}^px_{i,j}\theta_j)}{1+\exp(\sum_{j=1}^px_{i,j}\theta_j)}
\end{align*}
where $\bm{x}_i$ is the $p$-vector of covariates for observation $i$. In our simulation study, each vector $\bm{x}_i$ is simulated from a multivariate Normal with mean zero and $p\times p$ covariance matrix $\Sigma$. We use a prior that is independent for each $\theta_j$ and where $\theta_j \sim \frac{10}{p}\mathcal{N}(0, 10) +  (1-\frac{10}{p})\delta_0$, which corresponds to a prior that favors models with 10 selected variables. Data was generated using this model and the following choices for $\bm{\theta}$ and covariance matrix $\Sigma$:

\begin{enumerate}
\item A pair of correlated covariates, one of which is in the model: $\bm{\theta} = (1,0,...0)^T$ with $\Sigma_{2,1}=\Sigma_{2,1} = 0.9$, with $\Sigma_{i,i} = 1$ and $\Sigma_{i,j} = 0$ if $i\neq j$.
\item Structured correlation between all covariates with six active covariates: \\ $\bm{\theta} = (3,3,-2,3,3,-2,0,...,0)^T$ with $\Sigma_{i,j} = \exp(-|i-j|)$.
\item No correlation between covariates and six active covariates:\\ $\bm{\theta} = (3,3,-2,3,3,-2,0,...,0)^T$,  with $\Sigma_{i,i} = 1$ and $\Sigma_{i,j} = 0$ if $i\neq j$. 
\end{enumerate}

These simulation scenarios are analogous to others previously considered in the literature for linear regression. Scenarios 1 and 3 are are similar to considered by \cite{wang2011} and \cite{Zanella19} while Scenario 2 is similar to one considered by \cite{yang2016}. We present results for both ZigZag and the Bouncy Particle Sampler with Gaussian distributed velocities (almost identical results are obtained for the Bouncy Particle Sampler where the velocities are on the unit sphere). See Section \ref{supp:RE} of the Appendix for more details.

The PDMP methods are competitive with Gibbs sampling in low sample sizes and offer substantial efficiency gains for larger sample sizes. Smaller gains can also be seen when the dimension is increased with fixed sample size. Both BPS and ZigZag methods offer similar relative efficiencies across the experiments. The greatest efficiency gains for reversible jump PDMP methods was seen in Scenario 1 with Scenarios 2 and 3 offering lower gains in performance. This may be due to smaller models being more likely in Scenario 1 and thus less computational effort required for the PDMP methods in gradient calculations and simulation of event times. For all experiments the Gibbs sampler offers a computational advantage over the reversible jump methods in relative efficiency for marginal posterior probabilities of inclusion. This increased performance is unsurprising as the model update steps in the Gibbs sampler have marginalised over the parameter values $\bm{\theta}$ yielding efficient moves through the model space. In more general settings where it is not possible to integrate over $\bm{\theta}$ the mixing will be substantially poorer. 

\begin{table}
\caption{{Scenario 1 (pair of correlated variables)}: Relative efficiencies (RE) for methods, against a Reversible Jump algorithm, for the marginal posterior means (Mean) and marginal posterior probabilities of inclusion (PI).}
%\resizebox{\columnwidth}{!}{%
\centering
\begin{tabular}{ |p{1.5cm}||p{1cm}|p{1cm}|p{1cm}|p{1cm}|p{1cm}|p{1cm}|  }
\hline
 & \multicolumn{2}{|c|}{ZigZag} & \multicolumn{2}{|c|}{BPS} & \multicolumn{2}{|c|}{Gibbs}\\
\hline\hline
$n$, $p$ & PI & Mean & PI & Mean & PI & Mean \\
\hline
  \hline
100, 100 & 1.75 & 3.11 & 1.92 & 2.28 & 2.12 & 4.25 \\ 
  200, 100 & 5.60 & 8.75 & 5.49 & 11.62 & 2.76 & 2.83 \\ 
  400, 100 & 14.70 & 25.08 & 12.47 & 29.07 & 3.66 & 4.14 \\ 
  800, 100 & 20.18 & 30.45 & 17.59 & 34.38 & 3.71 & 3.79 \\ 
  \hline
  100, 200 & 1.30 & 1.38 & 1.69 & 1.81 & 2.54 & 2.26 \\ 
  200, 200 & 14.92 & 20.12 & 14.85 & 26.02 & 2.92 & 2.64 \\ 
  400, 200 & 23.30 & 34.21 & 21.44 & 39.83 & 2.86 & 2.89 \\ 
  800, 200 & 43.98 & 59.56 & 36.38 & 62.89 & 3.14 & 3.05 \\ 
  \hline
  100, 400 & 1.17 & 1.76 & 2.56 & 4.07 & 1.62 & 2.15 \\ 
  200, 400 & 12.70 & 6.72 & 14.64 & 17.22 & 2.42 & 2.17 \\ 
  400, 400 & 43.73 & 62.07 & 34.13 & 61.47 & 2.41 & 2.43 \\ 
  800, 400 & 102.36 & 152.02 & 78.26 & 133.46 & 2.68 & 2.78 \\ 
\hline
\end{tabular}
%}

\end{table}

\begin{table}
%\resizebox{\columnwidth}{!}{%
\caption{{Scenario 2 (General correlation)}: Relative efficiency (RE) for methods, against a Reversible Jump algorithm, for the marginal posterior means (Mean) and marginal posterior probabilities of inclusion (PI).}
\centering
\begin{tabular}{ |p{1.5cm}||p{1cm}|p{1cm}|p{1cm}|p{1cm}|p{1cm}|p{1cm}|  }
\hline
 & \multicolumn{2}{|c|}{ZigZag} & \multicolumn{2}{|c|}{BPS} & \multicolumn{2}{|c|}{Gibbs}\\
\hline\hline
$n$, $p$ & PI & Mean & PI & Mean & PI & Mean \\
\hline
  \hline
100, 100 & 0.57 & 0.83 & 1.15 & 1.86 & 1.05 & 0.76 \\ 
  200, 100 & 0.88 & 0.92 & 1.71 & 2.10 & 1.96 & 1.27 \\ 
  400, 100 & 1.39 & 1.50 & 2.11 & 2.32 & 1.69 & 1.00 \\ 
  800, 100 & 4.77 & 7.30 & 5.63 & 9.71 & 1.99 & 1.78 \\ 
  \hline
  100, 200 & 0.76 & 1.21 & 2.03 & 3.03 & 1.53 & 1.37 \\ 
  200, 200 & 2.28 & 2.80 & 4.87 & 6.84 & 1.92 & 1.31 \\ 
  400, 200 & 4.10 & 4.51 & 6.77 & 8.77 & 1.30 & 0.87 \\ 
  800, 200 & 10.36 & 13.23 & 11.12 & 19.21 & 1.41 & 1.12 \\ 
  \hline
  100, 400 & 1.09 & 1.27 & 2.93 & 4.01 & 1.12 & 0.88 \\ 
  200, 400 & 3.57 & 3.37 & 6.57 & 7.21 & 1.24 & 0.66 \\ 
  400, 400 & 8.05 & 8.65 & 13.45 & 16.24 & 1.15 & 0.83 \\ 
  800, 400 & 26.93 & 34.41 & 28.71 & 51.59 & 1.32 & 1.20 \\ 
\hline
\end{tabular}
%}

\end{table}

\begin{table}
\caption{{Scenario 3 (No correlation)}: Relative efficiency (RE) for methods, against a Reversible Jump algorithm, for the marginal posterior means (Mean) and marginal posterior probabilities of inclusion (PI).}
\centering
%\resizebox{\columnwidth}{!}{%
\begin{tabular}{ |p{1.5cm}||p{1cm}|p{1cm}|p{1cm}|p{1cm}|p{1cm}|p{1cm}|  }
\hline
 & \multicolumn{2}{|c|}{ZigZag} & \multicolumn{2}{|c|}{BPS} & \multicolumn{2}{|c|}{Gibbs}\\
\hline\hline
$n$, $p$ & PI & Mean & PI & Mean & PI & Mean \\
\hline
  \hline
100, 100 & 0.64 & 0.87 & 1.25 & 1.91 & 1.19 & 1.04 \\ 
  200, 100 & 1.40 & 1.54 & 2.52 & 3.12 & 2.20 & 1.69 \\ 
  400, 100 & 2.74 & 2.79 & 3.07 & 3.67 & 2.22 & 1.54 \\ 
  800, 100 & 5.20 & 8.59 & 5.64 & 10.74 & 1.72 & 1.50 \\
  \hline
  100, 200 & 0.76 & 0.94 & 1.56 & 1.77 & 1.24 & 1.13 \\ 
  200, 200 & 2.04 & 1.63 & 4.00 & 3.18 & 2.07 & 1.23 \\ 
  400, 200 & 7.60 & 12.68 & 11.90 & 24.23 & 1.63 & 1.38 \\ 
  800, 200 & 14.26 & 21.65 & 16.64 & 28.09 & 1.56 & 1.32 \\ 
  \hline
  100, 400 & 1.03 & 1.44 & 2.57 & 4.17 & 1.17 & 0.97 \\ 
  200, 400 & 2.31 & 1.59 & 4.18 & 2.98 & 1.37 & 0.84 \\ 
  400, 400 & 9.38 & 10.42 & 14.66 & 18.60 & 1.40 & 1.07 \\ 
  800, 400 & 30.73 & 49.94 & 35.35 & 71.83 & 1.42 & 1.26 \\ 
\hline
\end{tabular}
%}
\end{table}

%\subsubsection{Super-efficiency by subsampling}

The use of subsampling methods for Bayesian variable selection problems is a recently emerging area \cite[]{Qifan20, buchholz20} though it remains under-studied. One of the attractions of PDMP samplers is that they can be implemented in a way where they only access a small subset of data at each iteration, whilst still targeting the true posterior. We now investigate how these ideas work in the variable selection problem, by comparing the efficiency of three implementations of ZigZag with that of the Gibbs sampler, and see how this depends on the number of observations.  These are ZigZag using the full data, ZigZag using subsampling with a global bound, and ZigZag with subsampling control variates \cite[see][for details of both subsampling approaches]{bierkens2019zig}.

%Bierkins (2019)

%We propose subsampling approaches with control variates to allow for superefficient computation. To our knowledge, this is the first time that subsampling methods have been used for Bayesian variable selection problems. 
Standard application of control variates requires calculation of the gradient at a reference point using the full likelihood. Due to the trans-dimensional nature of variable selection problems, a full gradient calculation is not well defined. For this reason we choose to make use of control variates defined for a fixed model $\mathcal{M}$ where the gradient is well defined. These control variates are only used when the sampler is in this model. Extensions where multiple control variates are used for models with high probability would be straightforward. 

\begin{figure}[!h]
    \centering
    \includegraphics[width = \textwidth]{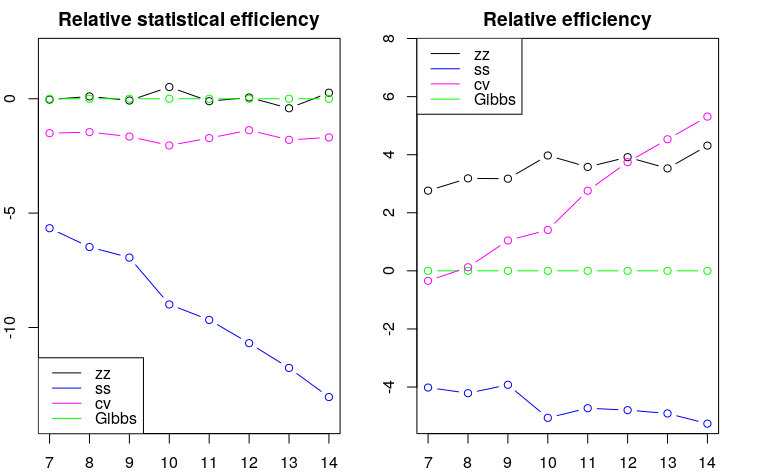}%{}
    \caption{Log-log plots of efficiency, relative to the Gibbs sampler, of different samplers as we vary the number of observations. Plotted are the relative efficiencies for the posterior mean conditional on model $\mathcal{M}^*$ where $\mathcal{M}^*$ corresponds to the true data generated model. The dataset was generated with a 15-dimensional regression parameter $\bm{\theta} = (1,1,0,0,...,0)$. The methods run are the Zig-Zag applied to the full dataset (zz, black), Zig-Zag with subsampling using global bounds (ss, blue), Zig-Zag with control variates (cv, magenta) and Gibbs sampling (Gibbs, green). %The control variate used is the maximum a posteriori estimate for the regression parameter using only the true nonzero variables. 
    All methods were initialised at the location of the control variate. Methods were given the same computational budget, for details see Section \ref{supp:RE} of the Appendix.}
    \label{fig:my_label1}
\end{figure}

%The best linear fit for the relative efficiency of Zig-Zag with control variates has a slope of -0.6 indicating that its performance is increasing relative to the standard Zig-Zag method. 
% Change R(S)E to R(S)I in captions and headings, log_2, and define labels.

%REFERENCE FIGURE. MORE DISCUSSION.
Results are shown in Figure \ref{fig:my_label1}.
Despite our simplistic implementation, these results indicate that Zig-Zag with control variates is becoming increasingly efficient relative to Zig-Zag using the full dataset as the number of samples increases. %, while standard sub-sampling Zig-Zag and Gibbs sampling become increasingly inefficient. 
Furthermore we see evidence of super-efficiency -- whilst the computational cost per ESS of the Gibbs sampler is expected to be linear in the number of observations, the relative efficiency plots suggest that this is sub-linear for ZigZag with control variates.

\subsection{Robust regression}

As mentioned in the introduction, a common approach to Bayesian variable selection is to use continuous spike-and-slab priors for each parameter rather than try to sample from the joint posterior of model and parameters. Such an approach is attractive as it enables standard gradient-based samplers, such as Hamiltonian Monte Carlo, to be used. We now compare such an approach, implemented with the popular STAN software \cite[]{carpenter2017stan}, to our PDMP samplers. Our aim is to both investigate the computational efficiencies of the two approaches and to show the differences in posterior that we obtain from these different types of prior. Our comparison is based on a robust linear regression model.

\begin{figure}[H]
    \centering
    \includegraphics[width = \textwidth]{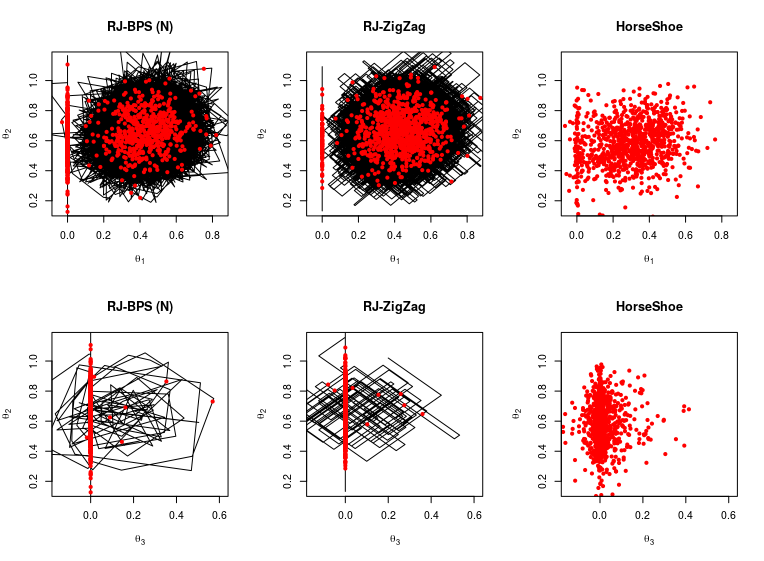}
    \caption{Dynamics of the samplers on a robust regression example with spike and slab or horseshoe prior. The top row shows the posterior for $\theta_1$ and $\theta_2$, bottom row shows the estimates for $\theta_2$ and $\theta_3$. The spike and slab distributions are sampled using the reversible jump PDMP samplers with reversible jump parameter $0.6$ and refreshment for the BPS methods set to 0.5. All methods are shown with $10^3$ samples (red) and the PDMP dynamics are shown in black. Sampling with the Horseshoe prior was implemented in Stan using NUTS. Both Stan and PDMP methods were run for the same computing time. To aid visualisation only the first $30\%$ of the PDMP trajectories are shown. }
\label{fig:dynamicsRR}
\end{figure}

In particular, we model the errors in our linear regression model as a mixture of Normals with different variances. Thus
\begin{align*}
\bm{Y} = \sum_{i=1}^p \bm{X}_j\theta_j + \epsilon, \qquad
\epsilon \sim~ \frac{1}{2}\text{N}(0, 1) + \frac{1}{2}\text{N}(0, 10^2).
\end{align*}
The continuous variable selection prior we consider is the regularised horseshoe \cite[]{piironen17a,piironen17}
\begin{align*}
\theta_j \sim N(0, \tau^2\tilde{\lambda}_j),\qquad 
\tilde{\lambda}_j = \frac{c^2\lambda_j}{c^2+\tau^2\lambda_j}, \qquad
\lambda_j \sim C^+(0,1)
\end{align*}
for $j=1,...,p$ where $C^+(0,1)$ denotes the half-Cauchy distribution for the standard deviation $\lambda_j$. %This prior has similar performance to a spike and slab prior where the slab component has a $N(0, c^2)$ distribution. 
The regularised horseshoe is a variation of the horseshoe prior \cite[]{carvalho2010horseshoe} that offers a continuous approximation of a Dirac spike and slab where the slab is a Normal distribution with finite variance $c^2$. The hyper-parameter $\tau$ controls the global shrinkage of the variables towards zero. In \cite{carvalho2010horseshoe} it was shown that for standard linear regression the optimal choice for a fixed value of $\tau$ is $\tau_0 = \sigma\frac{p_0}{(p-p_0)\sqrt{n}}$ where $p_0$ is a the number of nonzero variables in the sparse model and $\sigma$ is the noise variance. In line with this, we compare the regularised horseshoe prior with fixed hyper-parameter $\tau_0 = \frac{p_0}{(p-p_0)\sqrt{n}}$  against the spike-and-slab  prior 
$$\theta_j \sim \frac{p_0}{p}N(0,c^2) + \left(1-\frac{p_0}{p}\right)\delta_0$$ 
for $j=1,...,p$. MCMC for the model using a horseshoe prior was performed by Stan's implementation of NUTS \cite[]{hoffman2014no}.

%We first examine the dynamics of the samplers for these different priors and then compare the performance of the samplers and the priors.

%\subsubsection{Comparing dynamics}
We first compare the variable selection dynamics for a simple model with $p=4$ variables, $n = 120$ observations and regression parameter $\bm{\theta} = (0.5,0.5,0,0)^T$. The covariate values and residuals were generated as independent draws from a standard Normal and the prior expected model size is set to $p_0 = 1$. Example output for the PDMP samplers and the Stan implementation is shown in Figure \ref{fig:dynamicsRR}. The posteriors show the horseshoe prior replicating the effect of the spike-and-slab through shrinking the coefficients towards zero, but it is not able to give exact zeros. %however it is able to capture the overall shape of the spike and slab prior. 

%\subsubsection{Efficiency of sampling}
We now compare the reversible jump PDMP methods in terms of their sampling efficiency for a higher dimensional problem. The dataset is generated for $p=200$ variables and $n = 100$ observations with regression parameter $\bm{\theta} = (2,2,2,2,0,0,0....0)^T$. The covariates for each observation were drawn from an AR(1) process with lag-1 correlation of 0.5. The residuals were generated from a standard Cauchy distribution. 

We ran Stan with the default settings for a burnin of 1000 iterations and then for 16, 32, 64, ..., 2048 iterations. We ran the reversible jump PDMP samplers for the same wall clock time for both burnin and subsequent iterations. The experiment was repeated 35 times and the resulting boxplots of the posterior mean of $\theta_1$ are shown in Figure \ref{fig:samplingRR}. 
\begin{figure}[H]
    \centering
    \includegraphics[width = \textwidth]{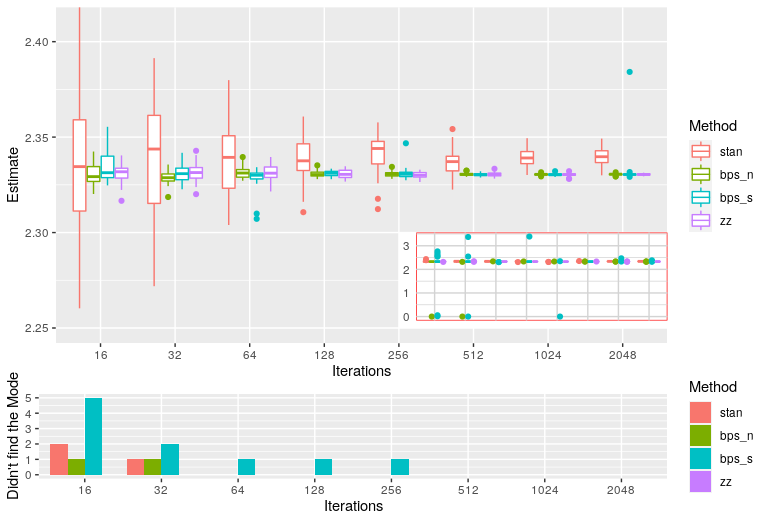}
    \caption{Sampling efficiency for reversible jump PDMP vs Stan for the robust regression example. The top figure is boxplots of the posterior mean of $\theta_1$ for increasing computational budget, with outliers from the sampler removed for visualisation purposes. These removed outliers correspond to times that the sampler has become stuck in a local mode where $\theta_1 = 0$.  The subplot shows the full results including outliers from the sampler. Both the Stan and reversible jump PDMP methods converge to similar solutions despite having different priors. The bottom figure shows the number of times that the samplers did not find the global mode.}
\label{fig:samplingRR}
\end{figure}
For the same computational budget, the reversible jump PDMP methods are able to attain better performance. However, it is also apparent from this simulation that the BPS reversible jump PDMP methods are more sensitive to local modes. It is unclear if this is an artifact of the sampler or due to the choice of refreshment rate $(0.6)$. This sensitivity is seen to diminish as the sampler is run for longer.  

The predictive ability of the methods is compared in Figure \ref{fig:predRR}. Here an additional $n=100$ observations were drawn from the same model and these were used as a hold-out dataset to validate the posterior predictive ability. The predictive ability is defined in terms of the Monte Carlo estimate of the mean square prediction performance $\frac{1}{100}\sum_{i=1}^{100}(y_i - \bm{x}_i^T\bm{\theta})^2$ where $\bm{\theta}$ is replaced by the samples generated by either Stan or samples given by a discretisation of time for the reversible jump PDMP samplers. The reversible jump PDMP samplers all give the same predictive performance for large iteration numbers while Stan, which uses a horseshoe prior, performs slightly worse.

\begin{figure}[H]
    \centering
    \includegraphics[width = \textwidth]{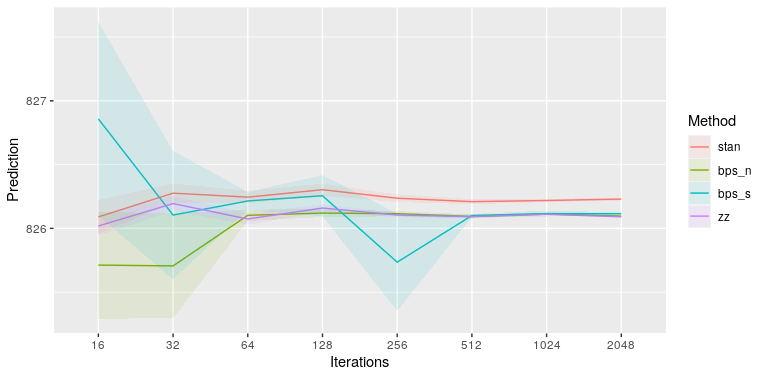}
    \caption{Predictive ability of reversible jump PDMP vs Stan for the robust regression example. The predictive ability is measured by Monte Carlo estimates of the mean square predictive performance.}
\label{fig:predRR}
\end{figure}

\section{Discussion}

We have shown how PDMP samplers can be extended so that they can sample from the joint posterior over model and parameters in variable selection problems. There are a number of open challenges that stem from this work. First, as with any MCMC algorithm, the reversible jump PDMP samplers have tuning parameters. Our simulation results were based on choosing these after empirically evaluating the performance of the samplers on one simple problem. Whilst the samplers mixed well, it is likely that better mixing could be achieved if more informed choices of tuning parameters were made, and theory for guiding such choices is needed. 

Second, the form of our reversible jump PDMP samplers is based on particularly features of the variable selection problem. In other model choice settings, different trans-dimensional moves may be needed. The theory we developed should be able to be adapted to give rules for choosing rates of such moves. Also our trans-dimensional moves are reversible, that is they balance probability flow from model $i$ to model $j$ by the flow of probability from model $j$ to model $i$ -- it would be interesting to see if non-reversible trans-dimensional moves could be constructed.

Third, it is likely that the reversible jump PDMP samplers will still struggle in situations where the posterior is multi-modal with well separated modes. For such cases it would be interesting to try and incorporate ideas such as tempering \cite[]{marinari1992simulated} to allow for better mixing across modes. 
Also given the close links between PDMP samplers and Hamiltonian Monte Carlo, it would be interesting to see if the ideas we have presented for trans-dimensional jumps could be adapted to be used with Hamiltonian Monte Carlo algorithms.

\bibliographystyle{royal}
\bibliography{biblio}

\newpage
\appendix
\setcounter{page}{1}
\begin{center}
    {\bf \Large Supplementary Material: Reversible Jump PDMP Samplers for Variable Selection}
\end{center}

\section{Proofs}
\subsection{Proof of Theorem \ref{th:extended-gen}}
\label{sec:proof-th1}

Recall theorems 34.15 and 34.19 from \cite{PDMPDavis}:
\begin{theorem}[34.15]
\label{th:Davis-3415}
Suppose that $\mu$ is a stationary distribution for $(\bm{x}_t)$. Then there exists a finite measure $\sigma$ on $\Gamma$ (the active boundary) such that for all $t>0$:
\[
    \mathbb{E}_\mu \frac{1}{t} \int_0^t f(\bm{x}_{s^-}) \mbox{d}p^*_s = \int_\Gamma f(\bm{x}) \sigma(\mbox{d}\bm{x})
\]
where $p^*$ is the counting process that counts the number of jumps for the boundary
\end{theorem}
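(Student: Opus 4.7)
The plan is to construct $\sigma$ directly as the expected boundary-hit intensity measure under the stationary law $\mathbb{P}_\mu$, and to verify the claimed identity by exploiting the time-translation invariance of $\mu$. Define, for every bounded non-negative measurable function $f$ on $\Gamma$,
\[
    L(f) \;=\; \mathbb{E}_\mu \int_0^1 f(\bm{x}_{s-}) \, \mbox{d}p^*_s .
\]
By the Markov property and the assumption that $\mu$ is invariant under the PDMP semigroup, the distribution of the shifted process $(\bm{x}_{t+u})_{t\geq 0}$ under $\mathbb{P}_\mu$ coincides with that of $(\bm{x}_t)_{t \geq 0}$ for every $u \geq 0$. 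This invariance transfers to the marked point process of boundary hits, so for every $0 \leq a < b$ one has $\mathbb{E}_\mu \int_a^b f(\bm{x}_{s-})\,\mbox{d}p^*_s = (b-a) L(f)$. Specialising to $a=0$, $b=t$ and dividing by $t$ makes the left-hand side of the theorem a $t$-independent quantity equal to $L(f)$. The content of the theorem therefore reduces to showing that the set function $A \mapsto L(\mathbf{1}_A)$ extends to a finite Borel measure $\sigma$ on $\Gamma$ and satisfies $L(f) = \int f\,\mbox{d}\sigma$.

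Next I would verify the three required properties of $\sigma := L(\mathbf{1}_\cdot)$. Non-negativity is automatic since the integrand in $L$ is non-negative. For countable additivity, if $A = \bigsqcup_n A_n$ is a disjoint Borel partition of $\Gamma$ then $\mathbf{1}_A = \sum_n \mathbf{1}_{A_n}$; monotone convergence, first inside the $\mbox{d}p^*_s$ integral (which has non-negative integrand and is a genuine counting measure along each sample path) and then inside the $\mathbb{E}_\mu$ expectation, yields $\sigma(A) = \sum_n \sigma(A_n)$. A standard monotone class argument then extends $L(f) = \int f \, \mbox{d}\sigma$ from indicators to bounded measurable $f$, which is the identity claimed.

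The main obstacle is establishing finiteness of $\sigma$, equivalently showing $\sigma(\Gamma) = \mathbb{E}_\mu N_1 < \infty$ where $N_t := p^*_t$ counts boundary hits in $[0,t]$. By Fubini, $\mathbb{E}_\mu N_1 = \int \mathbb{E}_{\bm{x}} N_1 \, \mu(\mbox{d}\bm{x})$. This is where the standard Davis regularity conditions on the PDMP enter: non-explosion gives $\mathbb{E}_{\bm{x}} N_1 < \infty$ pointwise, and the bounded-velocity hypothesis together with local boundedness of the event rate $\lambda$ (the assumption of Theorem \ref{th:extended-gen}) produces a bound on $\mathbb{E}_{\bm{x}} N_1$ that is uniform on compact sets, since along any unit-time flow segment the deterministic trajectory has bounded length and can meet $\Gamma$ only finitely often, whilst the expected number of random jumps is bounded by the supremum of $\lambda$ over the reachable compact set. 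Combining the compact-uniform bound with tightness of the probability measure $\mu$ then yields $\mathbb{E}_\mu N_1 < \infty$. I expect this integrability step, rather than the soft identity-level algebra of the first two paragraphs, to absorb the bulk of the technical work, and to be the place where the hypotheses of the ambient Theorem \ref{th:extended-gen} are actually used.
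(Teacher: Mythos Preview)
The paper does not prove this statement: Theorem~34.15 is quoted verbatim from Davis's monograph as a known ingredient for the proof of Theorem~\ref{th:extended-gen}, so there is no in-paper argument to compare your proposal against. Your outline---define the occupation functional at $t=1$, use stationarity of $\mu$ to obtain $t$-independence, then represent the functional as integration against a measure via monotone-class arguments---is the natural route and is in spirit how such results are established.

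The one genuine soft spot is your finiteness step. A bound on $\mathbb{E}_{\bm{x}} N_1$ that is uniform only on compact sets, together with tightness of the probability measure $\mu$, does \emph{not} yield $\int \mathbb{E}_{\bm{x}} N_1\,\mu(\mbox{d}\bm{x})<\infty$: tightness controls $\mu(K^c)$ but says nothing about $\sup_{\bm{x}\notin K}\mathbb{E}_{\bm{x}} N_1$, so the tail contribution $\int_{K^c}\mathbb{E}_{\bm{x}} N_1\,\mu(\mbox{d}\bm{x})$ remains uncontrolled. To close this you would need either a global (not merely compact-local) bound on $\mathbb{E}_{\bm{x}} N_1$, which the hypotheses of Theorem~\ref{th:extended-gen} do not obviously provide, or a more direct stationarity argument---for instance, bounding the expected number of boundary hits by a dimensional constant times one plus the expected number of random jumps, and then arguing that $\int\lambda\,\mbox{d}\mu<\infty$. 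Davis's own treatment avoids this issue by working under his standing regularity assumption that $\mathbb{E}_{\bm{x}} N_t<\infty$ for all $\bm{x}$ and $t$, combined with resolvent estimates, rather than deriving finiteness from compact-local rate bounds.
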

\begin{theorem}[34.19]
\label{th:Davis-3419}
Suppose $\mu$ is a stationary distribution for $(\bm{x}_t)$ and let $\sigma$ be the corresponding boundary measure defined above. Then:
\[
    \int_E \textswab{A} f(\bm{x}) \mu(\mbox{d}\bm{x}) + \int_\Gamma \textswab{C} f(\bm{z}) \sigma(\mbox{d}\bm{z}) = 0
\]
where $\textswab{A}$ is the extended generator and $\textswab{C} f = (Qf-f)$.
\end{theorem}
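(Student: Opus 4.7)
The plan is to apply Davis's Theorem 34.19 (quoted as Theorem~\ref{th:Davis-3419}) directly to the combined process living on $\bigcup_k E_k$, after assembling its extended generator and boundary behaviour in the notation of \cite{PDMPDavis}. The boundedness of each $\mathcal{V}^k$ together with the local bound on $\lambda$ ensures that the combined process is well-defined as a genuine PDMP (non-explosive, finitely many jumps in finite time almost surely), which is the hypothesis required to invoke Theorems~\ref{th:Davis-3415}--\ref{th:Davis-3419}.

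First I would read off the three ingredients of the combined PDMP. The deterministic dynamics on each $E_i$ are inherited from the individual sampler with generator $\mathcal{A}_i$. Superimposed on the within-model dynamics on $E_j$ are extra state-dependent Poisson jumps corresponding to variable additions: for each $(i,j)\in\mathcal{T}$ a jump from $\bm{z}\in E_j$ to $E_i$ occurs at rate $\beta_{i,j}(\bm z)$ with kernel $Q_{i,j}(\cdot,\bm z)$. These are ordinary bounded-rate Poisson events, so they contribute the term
\[
\sum_{i:(i,j)\in\mathcal T}\beta_{i,j}(\bm z)\!\int_{g_{i,j}^{-1}(\{\bm z\})}\!\![f(\bm z')-f(\bm z)]\,Q_{i,j}(\bm z'|\bm z)\,\mathrm d\bm z'
\]
to the extended generator on $E_j$, on top of $\mathcal{A}_j f$. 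The active boundary is $\Gamma=\bigsqcup_{(i,j)\in\mathcal T}\Gamma_{i,j}$, and Davis's boundary jump operator acts on $\bm z\in\Gamma_{i,j}$ as $\textswab{C}f(\bm z)=p_{i,j}[f(g_{i,j}(\bm z))-f(\bm z)]$, since with probability $1-p_{i,j}$ the trajectory continues into $E_i$ without jumping (which yields a zero contribution).

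Second, I would identify the boundary measure $\sigma$ produced by Theorem~\ref{th:Davis-3415}. This is the key structural step: because $\mu$ has a density on each $E_i$ and the flow on $E_i$ is smooth, the expected number of crossings of $\Gamma_{i,j}$ per unit time starting from $\mu$ is, by the standard flux/divergence computation, exactly
\[
\int_{\Gamma_{i,j}}\mu(\bm z)\,|\langle\bm v,\bm n_{i,j}\rangle|\,\mathrm d\bm z .
\]
Concretely, one approximates the count of crossings up to time $t$ by integrating the indicator of a thin slab around $\Gamma_{i,j}$ against the flow $(\bm v_s,\Phi(\bm z_s))$, uses stationarity of $\mu$ and then lets the slab thickness shrink; the characterisation in Theorem~\ref{th:Davis-3415} then forces $\sigma(\mathrm d\bm z)=\mu(\bm z)|\langle\bm v,\bm n_{i,j}\rangle|\,\mathrm d\bm z$ on each $\Gamma_{i,j}$. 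Finally, substituting these $\textswab{A}$, $\textswab{C}$ and $\sigma$ into Theorem~\ref{th:Davis-3419} and splitting the integral of $\textswab{A}f\,\mathrm d\mu$ over the disjoint union $\bigsqcup_j E_j$ produces exactly the three sums appearing in \eqref{eq:th1}.

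The main obstacle will be the rigorous identification of $\sigma$ with the flux measure; although geometrically transparent, it requires care because $\Gamma_{i,j}$ is a codimension-one subset only with respect to the position variable while the velocity is left free, so the ``$|\langle\bm v,\bm n_{i,j}\rangle|$'' factor must be derived from the deterministic component $(\bm v_t)$ of the flow rather than assumed. A secondary technicality is confirming that test functions $f$ from the individual generators $\mathrm{dom}(\mathcal A_i)$ can be glued into the domain of the combined extended generator; here one uses the fact that $g_{i,j}$ is a projection not changing $\bm x$ so that the boundary condition $f(\bm z)\mapsto p_{i,j}f(g_{i,j}(\bm z))+(1-p_{i,j})f(\bm z)$ is compatible with the smoothness requirements of Davis's framework, and the bounded-velocity, locally-bounded-$\lambda$ hypotheses prevent any pathology at infinity.
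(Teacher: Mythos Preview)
The labeled statement is Davis's Theorem~34.19, which the paper \emph{quotes} from \cite{PDMPDavis} rather than proves; your proposal (correctly) treats it as a black box and instead derives \eqref{eq:th1}, i.e.\ Theorem~\ref{th:extended-gen}. For that result your approach is essentially the paper's: decompose the extended generator of the combined process into the within-model pieces $\mathcal A_i$ plus the add-variable jump terms, identify $\textswab{C}f=p_{i,j}[f\circ g_{i,j}-f]$ on each $\Gamma_{i,j}$, and compute $\sigma$ as the flux measure $\mu(\bm z)|\langle\bm v,\bm n_{i,j}\rangle|\,\mathrm d\bm z$. The paper carries out the last step exactly where you flag the main obstacle: it first argues that random events contribute $o(t)$ to the crossing count so the problem reduces to the purely deterministic flow, and then proves a separate flux lemma (Theorem~\ref{th:deterministic-flux}) by sandwiching the set of initial points hitting a small patch of the hyperplane between two thin cylinders---which is precisely the ``thin slab'' computation you sketch.
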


\begin{remark}
    The construction of PDMPs with an active boundary we presented differs slightly from the one found in \cite[p. 57]{PDMPDavis}. However our process still fits the definition of \cite{PDMPDavis}: one simply needs to separate each $\mathcal{X}^k$ in two at the boundary and consider the two parts as two separated spaces. 
\end{remark}
From the expression of the extended generator given in (26.14) of \cite{PDMPDavis}, it follows that
\[
\sum_{i} \int_{E_i} \mathcal{A}_i f \mbox{d}\mu
    + \sum_{(i,j) \in \mathcal{T}} \int_{E_j} \beta_{i,j}(\bm{z}) \left( \int_{g_{i,j}^{-1}(\{\bm{z}\})} [f(\bm{z}') - f(\bm{z})] Q_{i,j}(\bm{z}'|\bm{z}) \, \mbox{d}\bm{z}'\right) \mu(\bm{z}) \mbox{d}\bm{z}
\]
corresponds to the first term of Theorem \ref{th:Davis-3419}.
It is now left to show that 
\[
    \sum_{(i,j) \in \mathcal{T}} p_{i,j} \int_{\Gamma_{i,j}} \left[ f(g_{i,j}(\bm{z})) - f(\bm{z}) \right] \mu(\bm{z}) |\langle \bm{v},\bm{n}_{i,j}\rangle| \, \mbox{d}\bm{z}
\]
corresponds to the boundary term of  Theorem \ref{th:Davis-3419}. Following \cite{PDMPDavis} notations, for a fixed $(i,j)$ in $\mathcal{T}$, the operator $\textswab{C} = (Qf -f)$ where $Q$ is the jump kernel at the boundary has the following expression:
\[
    \textswab{C}f = p_{i,j}\left[ f(g_{i,j}(\bm{z})) - f(\bm{z}) \right].
\]
Finally, we need to find an expression for measure $\sigma$. We start from the definition in Theorem 4:
\[
    \mathbb{E}_\mu \frac{1}{t} \int_0^t f(\bm{z}_{s^-}) \mbox{d}p^*_s = \int_{\Gamma_{i,j}} f(\bm{z}) \sigma(\mbox{d}\bm{z}),
\]
where $p^*_s$ is the counting process that counts the number of times we hit $\Gamma_{i,j}$.
This holds for all $t$, and we will derive $\sigma$ by considering the value of the left-hand side as $t\rightarrow 0$. Intuitively the idea is that, in this limit, the impact of the events will be negligible and we can evaluate the left-hand side by considering a process where $\bm{z}_0$ is simulated from $\mu$ but then is purely deterministic, with dynamics given by the deterministic part of the PDMP.

As indicator functions of compact sets separate measures on $\Gamma_{i,j}$ we can restrict ourselves to these functions. Let $K$ be a compact of $\Gamma_{i,j}$ and $f$ its characteristic function, $f = 1_K$. We also introduce some additional notation. Let $\tilde{\bm{z}}_t$ be a deterministic process that follows the dynamics of the deterministic part of the PDMP, and let $\tilde{p}_t^*$ be the associated process that counts the number of times $\tilde{\bm{z}}_t$ hits $\Gamma_{i,j}$. For our PDMP let $\Omega_0(t)$ be the set of paths for which no events occur by time $t$, and $\Omega_1(t)$ the set of paths with at least one event prior to time $t$. We can write
\[
\int_0^t 1_K(\bm{z}_{s^-}) \mbox{d}p^*_s = \int_0^t 1_K(\bm{z}_{s^-})1_{\Omega_0(t)} \mbox{d}p^*_s + \int_0^t 1_K(\bm{z}_{s^-})1_{\Omega_1(t)} \mbox{d}p^*_s,
\]
that is split the integral into two, with the first being the contribution from paths with no event and the second the contribution of paths with at least one event.

We now introduce two sets. Let $K_t = \{\bm{z} \in E | \bm{z}_s \in K \text{ with } \bm{z}_0 = z \text{ and } s\in[0,t] \}$, and $\tilde{K}_t= \{z \in E_i | \tilde{\bm{z}}_s \in K \text{ with } \tilde{\bm{z}}_0 = \bm{z} \text{ and } s\in[0,t] \}$. The first is the set of all possible starting points for our PDMP process for which it is possible to hit $K$ by time $t$, the latter is the same set but for the purely deterministic process (equivalently, all PDMP paths with no events by time $t$). As $K$ is compact so is $\tilde{K}_t$. Furthermore as the velocity space is bounded and $K$ is compact so is $K_t$. This, together with our assumption on the event rate means that we can upper bound the event rate for both $\bm{z}\in K_t$ and $\bm{z}\in \tilde{K}_t$, and denote such an upper bound by $\lambda^+$.

Now $\mathbb{E}_\mu \int_0^t 1_K(\bm{z}_{s^-})1_{\Omega_1(t)} \mbox{d}p^*_s$ can be bounded as for the integral to be non-zero we need $\bm{z}_0\in K_t$ and the number of times the PDMP hits $\Gamma_{i,j}$ is at most twice the number of events. Thus an upper bound is $2\lambda^+ t \Pr_\mu(\bm{z}_0\in K_t)$. As the volume of $K_t$ tends to 0 as $t\rightarrow 0$, this upper bound is $o(t)$.

To bound the contribution from the other integral, we notice that we can bound this from above by the the PDMP process without events, and from below by using our bound on the event rate. Thus
\[
\mathbb{E}_\mu \int_0^t 1_K(\tilde{\bm{z}}_{s^-}) \mbox{d}\tilde{p}^*_s \geq
\mathbb{E}_\mu \int_0^t 1_K(\bm{z}_{s^-})1_{\Omega_1(t)} \mbox{d}p^*_s \geq
\exp\{-t\lambda^+\} \mathbb{E}_\mu \int_0^t 1_K(\tilde{\bm{z}}_{s^-}) \mbox{d}\tilde{p}^*_s.
\]

Thus
\[
\mathbb{E}_\mu \int_0^t 1_K(\bm{z}_{s^-})1_{\Omega_1(t)} \mbox{d}p^*_s = \mathbb{E}_\mu \int_0^t 1_K(\tilde{\bm{z}}_{s^-}) \mbox{d}\tilde{p}^*_s +o(t).
\]
Thus, for any compact subset of $\Gamma_{i,j}$, $K$, we have
\[
\int_{\Gamma_{i,j}} 1_K(\bm{z}) \sigma(\mbox{d}\bm{z}) = \lim_{t\rightarrow 0} \mathbb{E}_\mu \frac{1}{t} \int_0^t 1_K(\bm{z}_{s^-}) \mbox{d}p^*_s =
\lim_{t\rightarrow 0} \mathbb{E}_\mu \int_0^t 1_K(\tilde{\bm{z}}_{s^-}) \mbox{d}\tilde{p}^*_s.
\]

The final step is to obtain $\lim_{t\rightarrow 0} \mathbb{E}_\mu \int_0^t 1_K(\tilde{\bm{z}}_{s^-}) \mbox{d}\tilde{p}^*_s$, which comes from the following result for solutions of ordinary differential equations: 

%\begin{assumption}
%Let $B$ be a bounded of $E_i$. Then there exists $M_B$ such that for all $\bm{z}\in B$, $\lambda(\bm{z}) \leq M_B$.
%\end{assumption}

%Let $K_t = \{z \in E_i | \bm{z}_s \in K \text{ with } \bm{z}_0 = z \text{ and } s\in[0,t] \}$. Since jumps do not change the position, and since the velocity space is bounded, $K_t$ is bounded. Therefore, for $T>0$, there exists $M_T$ such that for all $t\in [0,T]$,
%\[    \sup_{K_t} \lambda(\bm{z}) \leq M_T.\]
%Hence, for a fixed $t$ and any starting point in $K_t$, the probability of jumping at least once is bounded by $M_T t$. Finally,
%\[    Volume(K_t) \xrightarrow[t\rightarrow 0]{} 0, \]
%which implies that the measure $\sigma$ is determined entirely by the deterministic trajectories. Hence $\sigma$ is the flux of the deterministic flow through the boundary $\Gamma_{i,j}$. 

\begin{theorem}
\label{th:deterministic-flux}
Let $E$ be a subset of $\mathbb{R}^d$, $\mu$ a measure with a density with respect to the Lebesgue measure of $E$ and let $X$ be the vector field associated to the order ordinary differential equation
\[
    \frac{\mbox{d}(\bm{z}_t)}{\mbox{d}t} = X(\bm{z}_t).
\]
Let $H$ be an hyperplane of $E$ with normal $n_H$ and $t_*(\bm{z})$ be the first hitting time of a trajectory with $H$ starting from $z$. Assuming there exists a measure $\sigma$ on $H$ such that for all $f:H \rightarrow \mathbb{R}$:
\[
  \lim_{t\rightarrow 0} \frac{1}{t} \int_{E} f(\bm{z}_{t_*(\bm{z}_0)}) 1_{t_*(\bm{z}_0) \leq t} \, \mu(\bm{z}_0) \, \mbox{d}\bm{z}_0 = \int_H f(z) \sigma(\bm{z}) \mbox{d}\bm{z}
\]
then $\sigma$ has a density with respect to the Lebesgue measure of $H$ and 
\[
    \sigma(\mbox{d}\bm{z}) = \mu(\bm{z}) |\langle \bm{z}, \bm{n}_H \rangle| \mbox{d}\bm{z}
\]
\end{theorem}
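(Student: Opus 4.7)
The plan is a flow-based (tubular) change of variables. Let $\phi_t$ denote the flow of $X$, and on the transverse locus $H^{*}=\{\bm{y}\in H:\langle X(\bm{y}),\bm{n}_H\rangle\neq 0\}$ introduce
\[
\Psi:H^{*}\times[0,\delta)\to E,\qquad \Psi(\bm{y},s)=\phi_{-s}(\bm{y}),
\]
which parametrises a one-sided tubular neighbourhood of $H^{*}$ in $E$ by the hitting point $\bm{y}$ and the time-to-hit $s$. For $\delta$ small this is a local diffeomorphism, and a direct computation at $s=0$ gives
\[
|J_\Psi(\bm{y},0)| = |\langle X(\bm{y}),\bm{n}_H\rangle|,
\]
since the tangential components give the identity on $T_{\bm{y}} H$, while $\partial_s\Psi(\bm{y},0)=-X(\bm{y})$ contributes its normal component to the determinant.

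Next I would argue that for $t$ small the set $\{\bm{z}_0\in E:t_*(\bm{z}_0)\le t\}$ coincides, up to a Lebesgue-null set of trajectories that meet $H$ only at tangent points, with the image $\Psi(H^{*}\times[0,t))$. Changing variables $\bm{z}_0=\Psi(\bm{y},s)$ then gives
\[
\int_E f(\bm{z}_{t_*(\bm{z}_0)})\,\mathbf{1}_{\{t_*(\bm{z}_0)\le t\}}\,\mu(\bm{z}_0)\,d\bm{z}_0
= \int_{H^{*}}\!\!\int_0^{t} f(\bm{y})\,\mu(\phi_{-s}(\bm{y}))\,|J_\Psi(\bm{y},s)|\,ds\,d\bm{y}.
\]
Dividing by $t$ and letting $t\downarrow 0$, continuity of $\mu$ (which has a density) and smoothness of the flow give, by dominated convergence on compactly supported $f$, the limit
\[
\int_{H^{*}} f(\bm{y})\,\mu(\bm{y})\,|\langle X(\bm{y}),\bm{n}_H\rangle|\,d\bm{y}.
\]
Comparing with the hypothesis and using that indicators of compact subsets of $H^{*}$ form a separating class for finite measures on $H$, one reads off $\sigma(d\bm{y})=\mu(\bm{y})|\langle X(\bm{y}),\bm{n}_H\rangle|\,d\bm{y}$ on $H^{*}$; on the tangent locus $H\setminus H^{*}$ the claimed density vanishes anyway, so $\sigma$ is determined everywhere.

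The main obstacle will be controlling the two sources of error uniformly in $s\in[0,t]$: the Jacobian increment $|J_\Psi(\bm{y},s)|-|J_\Psi(\bm{y},0)|$ and the density increment $\mu(\phi_{-s}(\bm{y}))-\mu(\bm{y})$. Both are $O(s)$ on compacts by smoothness, so after integration in $s$ they contribute only $O(t^2)$ and disappear on dividing by $t$. A secondary, more delicate point is handling initial conditions lying in an $O(t)$-thick slab around the tangent locus $H\setminus H^{*}$: one must verify that these contribute $o(t)$, which is plausible because the normal speed $|\langle X(\bm{y}),\bm{n}_H\rangle|$ vanishes there, so the set of $\bm{z}_0$ that actually reach $H$ within time $t$ through that neighbourhood has volume of smaller order than $t$.
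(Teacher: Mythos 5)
Your argument is correct in outline, but it reaches the conclusion by a genuinely different route from the paper. You use a flow-box (tubular neighbourhood) change of variables $\Psi(\bm{y},s)=\phi_{-s}(\bm{y})$, compute $|J_\Psi(\bm{y},0)|=|\langle X(\bm{y}),\bm{n}_H\rangle|$, and pass to the limit $t\downarrow 0$ inside the resulting iterated integral; this is essentially a coarea-type argument and correctly identifies the density (note the paper's statement writes $|\langle \bm{z},\bm{n}_H\rangle|$, which is a typo for $|\langle X(\bm{z}),\bm{n}_H\rangle|$, specialising to $|\langle\bm{v},\bm{n}_{i,j}\rangle|$ in the application). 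The paper instead localises at a single transverse point $\bm{z}_0$ with $a=\langle X(\bm{z}_0),\bm{n}_H\rangle\neq 0$, defines the hitting set $E_{r,t}$, and sandwiches it between two explicit cylinders obtained by comparing the flow with the constant-field flow $\bm{z}+sX(\bm{z}_0)$ (Lemmas \ref{lemma:error}--\ref{lemma:bound2}), then estimates volumes directly. What each approach buys: yours is shorter and more systematic, and the error control you flag (Jacobian and density increments being $O(s)$ on compacts, hence $O(t^2)$ after integration) is exactly the right bookkeeping; but it requires the flow to be $C^1$ so that $\Psi$ is a local diffeomorphism with a well-defined Jacobian, and you must additionally justify local injectivity of $\Psi$ (i.e.\ that backward trajectories from a compact transverse patch do not re-cross $H$ before time $t$, so that $t_*$ really equals $s$). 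The paper's sandwich argument needs only continuity of $X$ and avoids the injectivity issue by construction, at the cost of more explicit estimates. Both proofs share the same residual point about the tangent locus $H\setminus H^*$: neither rules out singular mass of $\sigma$ there, which you correctly flag as the delicate step; in the paper's application this locus is $\nu$-null, so the gap is harmless, but if you want a complete proof of the theorem as stated you should add the $o(t)$ bound for initial conditions feeding into a neighbourhood of the tangent locus.
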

For completeness, this result is proven in Appendix \ref{sec:proof_flux}. The intuition behind the $|\langle \bm{z}, \bm{n}_H \rangle|$ factor is that `area' of space that hits $\bm{z}$ within an infinitesimal time interval $\mbox{d}t$ is $|\langle \bm{z}, \bm{n}_H \rangle|\mbox{d}t$.

Applying this theorem to the deterministic part of our PDMP we get
\[
    \sigma(\bm{z}) \mbox{d}\bm{z}= \mu(\bm{z}) |\langle \bm{v},\bm{n}_{i,j}\rangle| \mbox{d}\bm{z}
\]
where $\bm{n}_{i,j}$ is the normal the the boundary $\Gamma_{i,j}$, which concludes the proof.

\subsection{Proof of Theorem \ref{th:invariant}}
\label{sec:proof-th2}
\begin{proof}
The core of the proof is rewriting (\ref{eq:th1}) of Theorem \ref{th:extended-gen} in a more suitable way.
First, notice that 
\[
\int_{\Gamma_{i,j}} f(g_{i,j}(\bm{z})) \mu(\bm{z}) |\langle \bm{v},\bm{n}_{i,j}\rangle| \, \mbox{d}\bm{z} = \int_{E_j} f(\bm{z}) \bar{\nu}_{i,j}(\bm{z}) \, \mbox{d}\bm{z}
\]
by definition of the pushforward measure $\bar{\nu}_{i,j}$.
Second, since $Q_{i,j}$ integrates to 1, we get:
\[
\int_{E_j} \beta_{i,j}(\bm{z}) \int_{g_{i,j}^{-1}(\{\bm{z}\})} f(\bm{z}) Q_{i,j}(\bm{z}'|\bm{z}) \, \mbox{d}\bm{z}' \mu(\bm{z}) \, \mbox{d}\bm{z} = \int_{E_j} \beta_{i,j}(\bm{z}) f(\bm{z}) \mu(\bm{z}) \, \mbox{d}\bm{z}.
\]
Third, using the definitions of $G_{i,j}$
\[
    \int_{g_{i,j}^{-1}(\{\bm{z}\})} f(\bm{z}') Q_{i,j}(\bm{z}'|\bm{z}) \, \mbox{d}\bm{z}' = \int_U f(G_{i,j}(\alpha,\bm{z})) Q_{i,j}(\alpha|\bm{z}) \, \mbox{d}\alpha
\]
Finally, using a change the change of variable from $\bm{z}'=(\bm{x}',\bm{v}')$ to $\alpha,\bm{z}$ defined as $\bm{z}'=G_{i,j}(\alpha,\bm{z})$, and the definition of $\nu_{i,j}$:
\[
     \int_{\Gamma_{i,j}} f(\bm{z}') \nu(\bm{z}') |\langle \bm{v}',\bm{n}_{i,j}\rangle| \, \mbox{d}\bm{z}' = \int_{E_j} \int_U f(G_{i,j}(\alpha,\bm{z})) \nu_{i,j}(G(\alpha,\bm{z})) |J_{G_{i,j}}(\alpha,\bm{z})| \, \mbox{d}\alpha \, \mbox{d}\bm{z}.
\]
Thus (\ref{eq:th1}) of Theorem \ref{th:extended-gen}, evaluated for $\pi=\nu$, can be restated as:
\begin{align}
\begin{split}
    0
    &= \sum_{i} \int_{E_i} \mathcal{A}_i f \mbox{d}\nu+ \sum_{(i,j) \in \mathcal{T}} \int_{E_j}  \left[ p_{i,j} \bar{\nu}_{i,j}(\bm{z})  - \beta_{i,j}(\bm{z}) \nu(\bm{z}) \right] f(\bm{z})\mbox{d}\bm{z}   \\ 
    &+ \sum_{(i,j) \in \mathcal{T}} \int_{E_j} \int_U \left[ \beta_{i,j}(\bm{z}) Q_{i,j}(\alpha|\bm{z}) \nu(\bm{z}) - p_{i,j} \nu_{i,j}(G_{i,j}(\alpha,\bm{z})) |J_{G_{i,j}}(\alpha,\bm{z})| \right] f(G_{i,j}(\alpha,\bm{z})) \, \mbox{d}\alpha \, \mbox{d}\bm{z}
    \end{split}
\end{align}
By construction, for any function $f\in \mathcal{F}$:
\[
    \sum_i \int_{E_i} \mathcal{A}_i f \mbox{d}\nu = 0.
\]
Conditions \ref{eq:thm2.1} and \ref{eq:thm2.2} of the theorem clearly imply that
\[
\sum_{(i,j) \in \mathcal{T}} \int_{E_j}  \left[ p_{i,j} \bar{\nu}_{i,j}(\bm{z})  - \beta_{i,j}(\bm{z}) \nu(\bm{z}) \right] f(\bm{z})\mbox{d}\bm{z} = 0
\]
and
\[
\sum_{(i,j) \in \mathcal{T}} \int_{E_j} \int_U \left[ \beta_{i,j}(\bm{z}) Q_{i,j}(\alpha|\bm{z}) \nu(\bm{z}) - p_{i,j} \nu_{i,j}(G_{i,j}(\alpha,\bm{z})) |J_{G_{i,j}}(\alpha,\bm{z})| \right] f(G_{i,j}(\alpha,\bm{z})) \, \mbox{d}\alpha \, \mbox{d}\bm{z} = 0
\]
respectively, which concludes the proof.
\end{proof}

%%%%%%%%%%%%%%%%%%%%%%%%%%%%%%%%%%%%%%%%%%%%%%%%%%%%%%%%%%%%%%%%%%%%%%%%%%%%%%%%%%
\subsection{Proof of Theorem \ref{th:invariantDiscrete}}
\label{sec:proof-th3}
\begin{proof}
The proof is largely the same as the proof of Theorem \ref{th:invariant}. The only change is how to treat the change of variable $\bm{z}' = G_{i,j}(\alpha,\bm{z})$. First, notice that $G_{i,j}$ leaves the position invariant since $g_{i,j}$ leaves the position invariant. Then, since the velocity space is discrete and $\alpha \mapsto G_{i,j}(\alpha,\bm{z})$ is a one to one mapping for a fixed $\bm{z} \in E_{j}$:
\begin{align*}
    \int_{\Gamma_{i,j}} f(\bm{z}') \nu_{i,j}(\bm{z}') |\langle \bm{v}',\bm{n}_{i,j}\rangle| \, \mbox{d}\bm{z}' &= \int_{\mathcal{X}^j} \sum_{\bm{v}' \in \mathcal{V}^i} f(\bm{x}',\bm{v}') \nu_{i,j}(\bm{x}',\bm{v}')  \, \mbox{d}\bm{x}' \\
    &= \int_{\mathcal{X}^j} \sum_{\bm{v}\in E_j} \sum_{\alpha \in U} f(G_{i,j}(\alpha,(\bm{x},\bm{v})) \nu_{i,j}(G_{i,j}(\alpha,(\bm{x},\bm{v}))) \, \mbox{d}\bm{x}.
\end{align*}
which we can write in integral form as
\[
\int_{\Gamma_{i,j}} f(\bm{z}') \nu_{i,j}(\bm{z}') |\langle \bm{v}',\bm{n}_{i,j}\rangle| \, \mbox{d}\bm{z}' =
\int_{E_j} \int_{U} f(G_{i,j}(\alpha,(\bm{x},\bm{v})) \nu_{i,j}(G_{i,j}(\alpha,(\bm{x},\bm{v}))) \, \mbox{d}\alpha \mbox{d}\bm{z}.
\]
%Following the same steps than for theorem \ref{th:invariantDiscrete} yields the result.
\end{proof}

%%%%%%%%%%%%%%%%%%%%%%%%%%%%%%%%%%%%%%%%%%%%%%%%%%%%%%%%%%%%%%%%%%%%%%%%%%%%%%%%%%
\subsection{Proof of Proposition \ref{prop:1}}
\begin{proof} 
Here:
\[
    \mathcal{V}^i = \{\bm{v} \in \mathbb{R}^{|\gamma^i|} \text{ such that } \|\bm{v}\| = 1\}
\]
For $\bm{z} \in E_i$:
\[
    \nu(\bm{z}) = \pi_i(\bm{x}) \frac{1}{A_{sphere}(|\gamma^i|)} 
\]
with $A_{sphere}(|\gamma^i|) = \frac{\Gamma(\frac{|\gamma^i|}{2})}{2\pi^{\frac{|\gamma^i|}{2}}}$ the area of the unit sphere of $\mathbb{R}^{|\gamma^i|}$.

In this case we can define $G_{i,j}: (-1,1) \times E_j\rightarrow E_i$ such that 
$\bm{z}'=(\bm{x}',\bm{v}')=G_{i,j}(\alpha,\bm{z})$, where $\bm{z}=(\bm{x},\bm{v})$, as $\bm{x}'=\bm{x}$ and
\[
     \bm{v}'=\sqrt{1 - \alpha^2} \bm{v} + \alpha \bm{n}_{i,j}.
\]

Let $B \subset \mathcal{V}^j$
\begin{align*}
    \int_B \bar{\nu}_{i,j}(\bm{x},\bm{v}) \mbox{d}\bm{v} &= \pi_i(\bm{x})\int_{g_{i,j}^{-1}(B)} |<\bm{v}',\bm{n}_{i,j}>| \frac{1}{A_{sphere}(|\gamma^i|)} \mbox{d}\bm{v}' \\
    &= \pi_i(\bm{x}) \frac{1}{A_{sphere}(|\gamma^i|)} \int_{[-1,1] \times B} |< \sqrt{1 - \alpha^2} \bm{v} + \alpha \bm{n}_{i,j} ,\bm{n}_{i,j}>| |J_{G_{i,j}}| \mbox{d}\alpha \mbox{d}\bm{v} \\
    &= 2 \pi_i(\bm{x})  \frac{1}{A_{sphere}(|\gamma^i|)} \int_{(0,1] \times B} \alpha |J_{G_{i,j}}| \mbox{d}\alpha \mbox{d}\bm{v},
\end{align*}
where the last inequality uses the symmetry of the integral with respect to $\alpha$, and that $<\bm{v},\bm{n}_{i,j}>=0$ by definition of velocities in $E_j$.

The determinant of $G_{i,j}$ must be carefully computed since the velocities lives in the sphere and not on the full vector space. We get:
\[
    |J_{G_{i,j}}(\alpha,\bm{v})| = \sqrt{1-\alpha^2}^{|\gamma^j|-2}.
\]
Therefore for $|\gamma^j| > 0$:
\begin{align*}
    \bar{\nu}_{i,j}(\bm{z}) &= \pi_i(\bm{x}) \frac{2}{A_{sphere}(|\gamma^i|)}.     \frac{1}{|\gamma^j|}
\end{align*}
Hence for $|\gamma^j| > 0$:
\begin{align}
    \beta_{i,j} &= p_{i,j} \frac{\pi_i(\bm{x})}{\pi_j(\bm{x})}  \frac{2 A_{sphere}(|\gamma^j|)}{A_{sphere}(|\gamma^i|)} \frac{1}{|\gamma^j|},
\end{align}
and
\begin{align}
    Q_{i,j}(\alpha|\bm{z}) &= \frac{|\alpha| |\gamma^j| \sqrt{1-\alpha^2}^{|\gamma^j|-2}}{2} \text { for } \bm{z} \in E_j \text{ and } \alpha \in (-1,1)
\end{align}
For $|\gamma^j| = 0$, BPS and ZigZag are equivalent, thus we use ZigZag rates.
\end{proof}

\subsection{Proof of Proposition \ref{prop:2}}
\begin{proof} 
Here:
\[
    \mathcal{V}^i = \mathbb{R}^{|\gamma^i|}
\]
Therefore, for $\bm{z} \in E_i$:
\[
    \nu(\bm{z}) = \pi_i(\bm{x}) \frac{1}{(2\pi)^{\frac{|\gamma^i|}{2}}} e^{-\frac{1}{2}\|\bm{v}\|^2}
\]
For $(i,j) \in \mathcal{T}$ with $|\gamma^j| > 0$, we can define $G_{i,j}: (-1,1) \times E_j\rightarrow E_i$ such that 
$\bm{z}'=(\bm{x}',\bm{v}')=G_{i,j}(\alpha,\bm{z})$, where $\bm{z}=(\bm{x},\bm{v})$, as $\bm{x}'=\bm{x}$ and
\[
    %G_{i,j}(\alpha,\bm{v}) 
    \bm{v}' = \bm{v} + \alpha \bm{n}_{i,j}.
\]
Clearly, we have $|J_{G_{i,j}}(\alpha,\bm{z})| = 1$.
Let $B \subset \mathcal{V}^j$
\begin{align*}
    \int_B \bar{\nu}_{i,j}(\bm{x},\bm{v}) \mbox{d}\bm{v} &= \pi_i(\bm{x})\int_{g_{i,j}^{-1}(B)} |<\bm{v}',\bm{n}_{i,j}>| \frac{1}{(2\pi)^{\frac{|\gamma^i|}{2}}} e^{-\frac{1}{2}\|\bm{v}\|^2} \mbox{d}\bm{v}' \\
    &= \pi_i(\bm{x}) \frac{1}{(2\pi)^{\frac{|\gamma^i|}{2}}} \int_{\mathbb{R} \times B} |<\bm{v}+\alpha \bm{n}_{i,j},\bm{n}_{i,j}>| e^{-\frac{1}{2}\| \bm{v}+\alpha \bm{n}_{i,j} \|^2} |J_{G_{i,j}}| \mbox{d}\alpha \mbox{d}\bm{v} \\
    &= \pi_i(\bm{x}) \frac{1}{(2\pi)^{\frac{|\gamma^i|}{2}}} \int_{\mathbb{R} \times B} |\alpha| e^{-\frac{1}{2}(\alpha^2 + \|\bm{v}\|^2)} \mbox{d}\alpha \mbox{d}\bm{v}\\
    &= \pi_i(\bm{x}) 2 \frac{1}{(2\pi)^{\frac{|\gamma^i|}{2}}} \int_{B} e^{-\frac{1}{2}(\|\bm{v}\|^2)} \mbox{d}\bm{v}.
\end{align*}
Therefore for $|\gamma^j| > 0$:
\begin{align*}
    \bar{\nu}_{i,j}(\bm{z}) &= \pi_i(\bm{x}) 2 \frac{1}{(2\pi)^{\frac{|\gamma^i|}{2}}} e^{-\frac{1}{2}(\|\bm{v}\|^2)}, 
\end{align*}
%Hence for $|\gamma^j| > 0$:
and
\begin{align}
    \beta_{i,j} &= p_{i,j} \frac{\pi_i(\bm{x})}{\pi_j(\bm{x})}  \frac{2}{\sqrt{2\pi}}
\end{align}
Furthermore, $|J_{G_{i,j}}| = 1$ thus:
\begin{align}
    Q_{i,j}(\alpha|\bm{z}) &= 2 |\alpha| e^{-\frac{1}{2}\alpha^2} \text { for } \bm{z} \in E_j \text{ and } \alpha \in \mathbb{R}
\end{align}
\end{proof}

\subsection{Proof of Theorem \ref{th:deterministic-flux}} \label{sec:proof_flux}

\begin{proof}
Let  $Z(t,\bm{z})$ be the flow associated to the ODE. Fix $\bm{z}_0$ in $H$ such that $a=\langle X(\bm{z}_0),\bm{n}_H\rangle\neq 0$.  In order to show that $\sigma$ has the required form it is enough to prove that
\[
\lim_{r\rightarrow 0}\frac{\sigma(H\cap B(\bm{z}_0,r))}{\mbox{Volume}_H(H\cap B(\bm{z}_0,r))}=|a|\mu(\mbox{z}_0)
\]
where $B(\bm{z}_0,r)$ is the ball of radius $r$ centered on $\bm{z}_0$, $H\cap B(\bm{z}_0,r)$ is the slice of this ball that lies on $H$, and $\mbox{Volume}_H$ is the Lebesgue measure of the hyperplane  $H$. 

For $r$ and $t>0$, let 
\[
E_{r,t}=\{\bm{z}\in E: 0\leq t_*(\bm{z})\leq t,\, Z(t_*(\bm{z}),\bm{z})\in B(\bm{z}_0,r) \cap H\}
\]
We can re-write the definition of $\sigma$ in terms of $E_{r,t}$, from which we are reduced to show that
\[
\lim_{r\rightarrow 0}\lim_{t\rightarrow 0}\frac{\tfrac1t \mu(E_{r,t})}{\mbox{Volume}_H(H\cap B(\bm{z}_0,r))}=|a|\mu(\bm{z}_0)
\]
The idea of the proof is that to calculate $\mu(E_{r,t})$ we need to find the volume of $E_{r,t}$ as $r,t\rightarrow 0$, and we will do this by bounding $E_{r,t}$ from both above and below by cylinders. We can approximated these cylinders by the set $E_{r,t}$ that we would obtain if $X(\bm{z})=X(\bm{z}_0)$ (i.e. the derivative was constant). The following lemma enables us to quantify the order of error for such an approximation.

\begin{lemma}
\begin{enumerate}
    \item Let $r_0 > 0$. When $t>0$ is small enough, $Z(s,\bm{z})\in B(\bm{z}_0,r_0)$ for all $z\in B(\bm{z}_0,r_0/2)$ and $s\in [-t,t]$.
    \item Let $e(s,\bm{z})$ be the function defined by $e(s,\bm{z})=Z(s,\bm{z})-(\bm{z}+s X(\bm{z}_0))$. Then:
    \[
        \lim_{(s,\bm{z})\rightarrow (0,\bm{z}_0), s\neq 0}\tfrac1s e(s,\bm{z})=0.
    \]
\end{enumerate}
\label{lemma:error}
\end{lemma}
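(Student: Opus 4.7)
The plan is to unwind both claims from the integral form of the ODE, $Z(s,\bm{z}) = \bm{z} + \int_0^s X(Z(u,\bm{z}))\,du$, combined with continuity of $X$ and of the flow. Nothing beyond continuity of $X$ should actually be required.

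For part 1, I would fix $r_0>0$ and set $M = \sup_{\bm{y}\in \overline{B(\bm{z}_0,r_0)}} |X(\bm{y})|$, which is finite by continuity of $X$ on a compact set. Given $\bm{z}\in B(\bm{z}_0,r_0/2)$, let $\tau = \inf\{|s|\;:\;Z(s,\bm{z})\notin B(\bm{z}_0,r_0)\}$. For $|s|<\tau$ the integral bound gives
$$
|Z(s,\bm{z})-\bm{z}_0|\le |\bm{z}-\bm{z}_0| + \int_0^{|s|}|X(Z(u,\bm{z}))|\,\mbox{d}u \le \tfrac{r_0}{2} + |s|M.
$$
Choosing $t < r_0/(2M)$ makes the right-hand side strictly less than $r_0$ throughout $[-t,t]$, which by continuity of $s\mapsto Z(s,\bm{z})$ forces $\tau>t$, so $Z(s,\bm{z})\in B(\bm{z}_0,r_0)$ for all $s\in[-t,t]$. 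The only mild subtlety is this bootstrap: we may use the bound on $X$ only inside $B(\bm{z}_0,r_0)$, so we have to argue the trajectory does not escape first, and the exit-time argument handles this cleanly.

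For part 2, the same integral identity rewrites $e(s,\bm{z})$ as
$$
e(s,\bm{z}) = Z(s,\bm{z}) - \bm{z} - sX(\bm{z}_0) = \int_0^s \bigl[X(Z(u,\bm{z})) - X(\bm{z}_0)\bigr]\,\mbox{d}u,
$$
so $\tfrac{1}{s}e(s,\bm{z})$ is the average of $X(Z(u,\bm{z})) - X(\bm{z}_0)$ over $u$ between $0$ and $s$. Given $\varepsilon>0$, continuity of $X$ at $\bm{z}_0$ yields $\eta>0$ with $|X(\bm{y}) - X(\bm{z}_0)|<\varepsilon$ whenever $|\bm{y}-\bm{z}_0|<\eta$; applying part 1 with $r_0 = \eta$ supplies $t_0>0$ such that for $|s|<t_0$ and $\bm{z}\in B(\bm{z}_0,\eta/2)$, the entire segment $\{Z(u,\bm{z}):u\text{ between }0\text{ and }s\}$ lies in $B(\bm{z}_0,\eta)$. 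Averaging the pointwise bound $\varepsilon$ then gives $|\tfrac{1}{s}e(s,\bm{z})|\le \varepsilon$, which is the required limit.

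Neither step poses a real obstacle: both are standard continuous-dependence facts. Part 1 is the only place requiring a bootstrap, and part 2 falls out of part 1 together with the fundamental theorem of calculus. If one wished to avoid the exit-time argument, an alternative is to invoke continuous dependence of the flow on $(s,\bm{z})$ directly (so that $Z(s,\bm{z})$ lies in any prescribed neighbourhood of $\bm{z}_0$ for $(s,\bm{z})$ close enough to $(0,\bm{z}_0)$), but the integral route is more self-contained and keeps the two parts tightly linked.
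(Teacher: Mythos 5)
Your proof is correct and, for part 2, essentially coincides with the paper's: the paper writes $e_{\bm{z}}(s)$ as a function with derivative $X(Z(s,\bm{z}))-X(\bm{z}_0)$ and applies the mean value theorem with a local modulus-of-continuity constant $C_{\bm{z}_0,r}\to 0$, which is the same estimate you obtain by integrating and averaging. The only genuine divergence is in part 1: the paper disposes of it in one line by invoking continuity of the flow $Z$ on $\mathbb{R}\times E$, which is exactly the alternative you mention at the end, whereas you run the elementary exit-time bootstrap from the integral equation. Your version is more self-contained and, arguably, more complete --- the paper's appeal to pointwise continuity does not by itself deliver the uniformity over all $\bm{z}\in B(\bm{z}_0,r_0/2)$ (one needs continuity on a compact set, or your explicit bound $\tfrac{r_0}{2}+|s|M$), so your bootstrap actually fills a small gap that the paper glosses over. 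The trade-off is brevity: the paper's route is a single sentence if one is willing to take uniform continuous dependence of the flow as known. Both arguments use only continuity of $X$, as you note.
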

\begin{proof}
For part (1): $Z$ is continuous as a function from $\mathbb{R}\times E$ to $E$. Therefore for $t$ small enough, $Z(s,z)\in B(\bm{z}_0,r_0)$ for all $\bm{z}\in B(\bm{z}_0,r_0/2)$ and $s\in [-t,t]$.

For part (2): We define 
\[
    C_{\bm{z}_0,r} = \sup_{\bm{z} \in B(\bm{z}_0,r)} \|X(Z(s,\bm{z})) - X(\bm{z}_0)\|.
\]
We clearly have $C_{\bm{z}_0,r} < \infty$ and $C_{\bm{z}_0,r} \xrightarrow[r\rightarrow 0]{} 0$. We set $r_0 = 2 \|\bm{z} - \bm{z}_0\|$, and
using the mean value theorem with the function $e_{\bm{z}}(s)=e(s,\bm{z})$ whose derivative is $X(Z(s,\bm{z}))-X(\bm{z}_0)$ with the fact that $e(0,\bm{z})=0$ implies 
\[
    \|e(s,\bm{z})\| \leq s C_{\bm{z}_0,2\|\bm{z}-\bm{z}_0\|}
\]
which concludes the proof.
\end{proof}

The following two lemmas then give, respectively, the cylinders that lower and upper bound $E_{r,t}$.
%For the purpose of clarity, the two following lemmas are stated somewhat informally. The $O(1)$ and $o(r)$ should be understood as being able to make the stated inclusions true by choosing positive numbers that are $O(1)$ or $o(r)$. 
\begin{lemma} For $z\in B(\bm{z}_0,r) \cap H$ and $s \in [-t,0]$, 
there exists a constant $C>0$ and function $o(r)\geq 0$ with $lim_{r\rightarrow 0} o(r)/r=0$ such that
\[
E_{r,t}\subset \left(B(\bm{z}_0,r+tC)\cap H\right) + [-at-o(r)t,0]\bm{n}_H
\]
\label{lemma:bound1}
\end{lemma}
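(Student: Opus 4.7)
The plan is to write each $\bm{w} \in E_{r,t}$ as $\bm{w} = Z(-s,\bm{z})$ with $s\in [0,t]$ and $\bm{z} \in B(\bm{z}_0,r)\cap H$, then apply the first-order expansion from Lemma~\ref{lemma:error}(2) and decompose the resulting vector into its tangential and normal components with respect to $H$. By Lemma~\ref{lemma:error}(1), shrinking $t$ guarantees that the backward trajectory from $\bm{z}$ stays in a fixed neighborhood of $\bm{z}_0$, so Lemma~\ref{lemma:error}(2) yields
\[
\bm{w} = \bm{z} - s X(\bm{z}_0) + e(-s,\bm{z}), \qquad \|e(-s,\bm{z})\| \leq s\,\eta(r),
\]
with $\eta(r)\to 0$ as $r\to 0$ (concretely, $\eta(r)=C_{\bm{z}_0,2r}$ coming out of the proof of Lemma~\ref{lemma:error}).

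Next I would split $X(\bm{z}_0) = a\bm{n}_H + X_\parallel$ with $X_\parallel\in H$ and decompose $\bm{w} = \bm{h} + \beta\bm{n}_H$. Since $\bm{z}\in H$, this gives $\bm{h} = \bm{z} - sX_\parallel + P_H(e(-s,\bm{z}))$ and $\beta = -sa + \langle e(-s,\bm{z}),\bm{n}_H\rangle$, where $P_H$ denotes the orthogonal projection onto $H$. The triangle inequality then yields
\[
\|\bm{h}-\bm{z}_0\|\leq r + s\|X_\parallel\| + s\eta(r) \leq r + tC
\]
for any fixed $C > \|X_\parallel\|$ and $r$ small enough that $\eta(r)\leq C-\|X_\parallel\|$; hence $\bm{h} \in B(\bm{z}_0,\, r+tC)\cap H$.

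For the normal component, without loss of generality $a>0$ (otherwise reverse $\bm{n}_H$ and relabel). The lower bound follows from $\beta\geq -sa - s\eta(r) \geq -at - \eta(r)t$, so setting $o(r):=\eta(r)$ gives $\beta\geq -at - o(r)t$. For the upper bound $\beta\leq 0$: if $s=0$ then $\bm{w}=\bm{z}\in H$ and $\beta=0$; if $s>0$ then $\beta\leq -sa + s\eta(r) = s(\eta(r)-a)$, which is strictly negative once $r$ is small enough that $\eta(r)<a$. Combining, $\bm{w} \in (B(\bm{z}_0,r+tC)\cap H) + [-at - o(r)t,\, 0]\,\bm{n}_H$, which is the claimed inclusion.

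The main obstacle is the uniformity of the error bound: both $C$ and $o(r)$ must be independent of the specific $\bm{w}$ and $s$ chosen. This is handled by first invoking Lemma~\ref{lemma:error}(1) to confine all backward trajectories from $B(\bm{z}_0,r)\cap H$ to a common neighborhood of $\bm{z}_0$, and then observing that the bound provided by Lemma~\ref{lemma:error}(2) depends only on the radius of that neighborhood, not on the individual starting point or time $s$.
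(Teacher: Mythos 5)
Your proof is correct and follows essentially the same route as the paper's: the first-order expansion $Z(s,\bm{z})=\bm{z}+sX(\bm{z}_0)+e(s,\bm{z})$ from Lemma \ref{lemma:error}, the split of $X(\bm{z}_0)$ into the tangential part $X(\bm{z}_0)-a\bm{n}_H\in H$ and the normal part $a\bm{n}_H$, and the uniform bound $\|e(s,\bm{z})\|\le |s|\,C_{\bm{z}_0,2r}$ to control both the radius inflation and the normal interval. You simply spell out the projection onto $H$ and $\bm{n}_H$ and the sign of the normal coefficient more explicitly than the paper does.
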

The addition on right-hand side of this statement should be interpreted as addition of sets on a product space. That is the set on the right-hand side is the cylinder of points that can be written as $\bm{z}+s \bm{n}_H$ where $\bm{z} \in B(\bm{z}_0,r+tC)\cap H$  and $s\in [-at-o(r)t,0]$.

\begin{proof}
We have that
\[
E_{r,t}=\{Z(s,\bm{z}):-t\leq s\leq 0,\, \bm{z}\in H\cap B(\bm{z}_0,r)\}.
\]
Let $\bm{z}$ be in $B(\bm{z}_0,r) \cap H$ and $s \in [-t,0]$
\begin{align*}
Z(s,\bm{z})&=\bm{z}+sX(\bm{z}_0)+e(s,\bm{z})\\
&=\bm{z}+s(X(\bm{z}_0)-a\bm{n}_H)+sa\bm{n}_H+e(s,\bm{z})
\end{align*}
Notice that $X(\bm{z}_0) - a\bm{n}_H$ is in $H$ hence $\bm{z} + s(X(\bm{z}_0) - a \bm{n}_H)$ is also in $H$.
Lemma \ref{lemma:error} allows us to bound $\frac{1}{s}e(s,\bm{z})$ on $[0,t]\times B(\bm{z}_0,r)\cap H$  which concludes the proof.
\end{proof}

\begin{lemma}
For $\bm{z}\in B(\bm{z}_0,r) \cap H$ and $s \in [-t,0]$, there exists a constant $C>0$ and function $o(r)\geq 0$ with $lim_{r\rightarrow 0} o(r)/r=0$ such that
\[
 (B(\bm{z}_0,r-tC)\cap H)+[-at+o(r)t,0]\bm{n}_H\subset E_{r,t}.
\]
\label{lemma:bound2}
\end{lemma}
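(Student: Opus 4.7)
The plan is dual to the proof of Lemma \ref{lemma:bound1}: rather than follow a trajectory backward from $H$, I take an arbitrary point $\bm{w}$ in the cylinder on the left-hand side and verify that its \emph{forward} flow first meets $H$ inside $B(\bm{z}_0,r)\cap H$ and within time $t$, so that $\bm{w}\in E_{r,t}$. By reversing $\bm{n}_H$ if necessary, I may assume $a=\langle X(\bm{z}_0),\bm{n}_H\rangle>0$. Write $\bm{w}=\bm{z}+s\bm{n}_H$ with $\bm{z}\in B(\bm{z}_0,r-tC)\cap H$ and $s\in[-at+o(r)t,0]$; splitting $X(\bm{z}_0)=(X(\bm{z}_0)-a\bm{n}_H)+a\bm{n}_H$ so that the first summand lies in $H$, the linearised trajectory $s'\mapsto\bm{w}+s'X(\bm{z}_0)$ hits $H$ at time $s'_0=-s/a\in[0,\,t-o(r)t/a]$ at the point $\bm{z}+s'_0(X(\bm{z}_0)-a\bm{n}_H)$, whose distance from $\bm{z}$ is at most $t\|X(\bm{z}_0)-a\bm{n}_H\|$.

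For the true flow I would expand $Z(s',\bm{w})=\bm{w}+s'X(\bm{z}_0)+e(s',\bm{w})$ and invoke Lemma \ref{lemma:error}(2): given $\varepsilon>0$, there exist thresholds $r_\varepsilon,t_\varepsilon>0$ such that $\|e(s',\bm{w})\|\leq \varepsilon s'$ whenever $\bm{w}\in B(\bm{z}_0,r_\varepsilon)$ and $s'\in(0,t_\varepsilon]$. The signed distance to $H$,
\[
\phi(s')=\langle Z(s',\bm{w})-\bm{z}_0,\bm{n}_H\rangle=s+as'+\langle e(s',\bm{w}),\bm{n}_H\rangle,
\]
then satisfies $\phi(0)=s\leq 0$ and $\phi(t)\geq s+(a-\varepsilon)t\geq(o(r)-\varepsilon)t$, so as soon as $o(r)>\varepsilon$ the intermediate value theorem supplies a hitting time $t_*(\bm{w})\in(0,t]$. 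Continuity of $X$ keeps $\phi'(s')=\langle X(Z(s',\bm{w})),\bm{n}_H\rangle$ within $\varepsilon$ of $a$ for $s'\in[0,t]$ and $t$ small, so $|t_*(\bm{w})-s'_0|\leq \varepsilon t/(a-\varepsilon)$.

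It then remains to check that $Z(t_*(\bm{w}),\bm{w})\in B(\bm{z}_0,r)$. This point differs from the linearised hitting point $\bm{z}+s'_0(X(\bm{z}_0)-a\bm{n}_H)$ by $\|e(t_*(\bm{w}),\bm{w})\|+\|X(\bm{z}_0)\|\,|t_*(\bm{w})-s'_0|=O(\varepsilon t)$, so combining with $\|\bm{z}-\bm{z}_0\|\leq r-tC$ and choosing $C$ at least $\|X(\bm{z}_0)-a\bm{n}_H\|$ plus a small margin absorbs every error into $B(\bm{z}_0,r)$. To produce a single $o(r)$ with the required decay I would set $\varepsilon(r)$ equal to the supremum of $\|e(s',\bm{w})\|/s'$ over $\bm{w}\in B(\bm{z}_0,r)$ and $s'\in(0,t]$ (for a suitably fixed small $t$), so that $\varepsilon(r)\to 0$ as $r\downarrow 0$, and take $o(r)=2\varepsilon(r)$. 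The main obstacle is the simultaneous bookkeeping of the three small parameters $r$, $t$, and $\varepsilon$: a single constant $C$ and a single function $o(r)$ must dominate every error term at once, and it is precisely the need to absorb the $\varepsilon t$ correction to the hitting point into the ball $B(\bm{z}_0,r)$ that forces the in-hyperplane ball to shrink from radius $r$ to $r-tC$ in the hypothesis.
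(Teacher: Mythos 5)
Your proposal is correct and follows essentially the same route as the paper's proof: take a point of the cylinder, expand $\langle Z(s',\cdot),\bm{n}_H\rangle = s + as' + \langle e(s',\cdot),\bm{n}_H\rangle$ via Lemma \ref{lemma:error}, and apply an intermediate-value argument to produce a hitting time in $(0,t]$. You are in fact more explicit than the paper on the final step (verifying that the hitting point lands in $B(\bm{z}_0,r)\cap H$, which is what forces the shrinkage to radius $r-tC$); the paper leaves this containment implicit.
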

\begin{proof}
Let $\bm{z} \in B(\bm{z}_0,r-tC)\cap H-[0,at-o(r)t]n_H$.
\begin{align*}
\langle Z(s,\bm{z}), \bm{n}_H \rangle &= \langle \bm{z}, \bm{n}_H \rangle + s a + \langle e(s,\bm{z}), \bm{n}_H \rangle
\end{align*}
Using Lemma \ref{lemma:error}, there exists $s\in [0,at - o(r)]$ (intuitively $s \approx \langle \bm{z}_0 - \bm{z},\bm{n}_H\rangle / a \leq \frac{at}{a} = t$) such that $\langle Z(s,\bm{z}), \bm{n}_H \rangle = \langle \bm{z}_0,\bm{n}_H\rangle$, which implies $Z(s,\bm{z})\in H$. Hence $\bm{z}\in E_{r,t}$.
\end{proof}

Using Lemmas \ref{lemma:bound1} and \ref{lemma:bound2}, we have constants $C_1$ and $C_2$ and a function $o_1(r)$ and $o_2(r)$ with $o_k(r)/r\rightarrow 0$ ($k=1,2$) such that
\begin{align*}
    &\frac{\mbox{Volume}(E_{r,t})}{t} \geq (a -o_1(r)) \mbox{Volume}_H(B(\bm{z}_0,r-t C_1)\cap H) \\
    &\frac{\mbox{Volume}(E_{r,t})}{t} \leq (a + o_2(r)) \mbox{Volume}_H(B(\bm{z}_0,r+tC_2)\cap H)
\end{align*}
where $\mbox{Volume}$ and $\mbox{Volume}_H$ are with respect to the Lebesgue measure of $E$ and $H$ respectively. Since $\mu$ has continuous densities, for any $\epsilon > 0$ there exists $r_0$ such that for every $r < r_0$, 
\[
    \left|\frac{\mu(E_{r,t})}{t} - \frac{\mu(\bm{z}_0) \mbox{Volume}(E_{r,t})}{t}\right| < \epsilon.
\]
Hence for every $\epsilon > 0$ there exists $r_0 > 0$ and $T > 0$ such that for every $r < r_0$ and $0 < t < T$,
\[
\left|\frac{\tfrac1t \mu(E_{r,t})}{\mbox{Volume}_H(H\cap B(\bm{z}_0,r))} - |a|\mu(\bm{z}_0) \right| < \epsilon
\]
which concludes the proof.

\end{proof}
\section{Sensitivity to tuning parameters}
The tuning parameters for the reversible jump PDMP methods consist of:
\begin{enumerate}
\item the reversible jump parameter $p_{i,j}$ introduced in Section 3;
\item the refreshment distribution for velocities of BPS; and
\item the refreshment rate $\lambda^{ref}$ for BPS.
\end{enumerate}
To investigate the performance of our methods we consider sampling from a 100-dimensional spike and slab distribution with $\theta_j \sim~ 0.5\mathcal{N}(0.5, 1) + 0.5\delta_0,$ independently for $j=1,...,100$. Each sampler was run 100 times for a computational budget of 10. 

\begin{figure}[H]
    \centering
    \includegraphics[width = .7\textwidth]{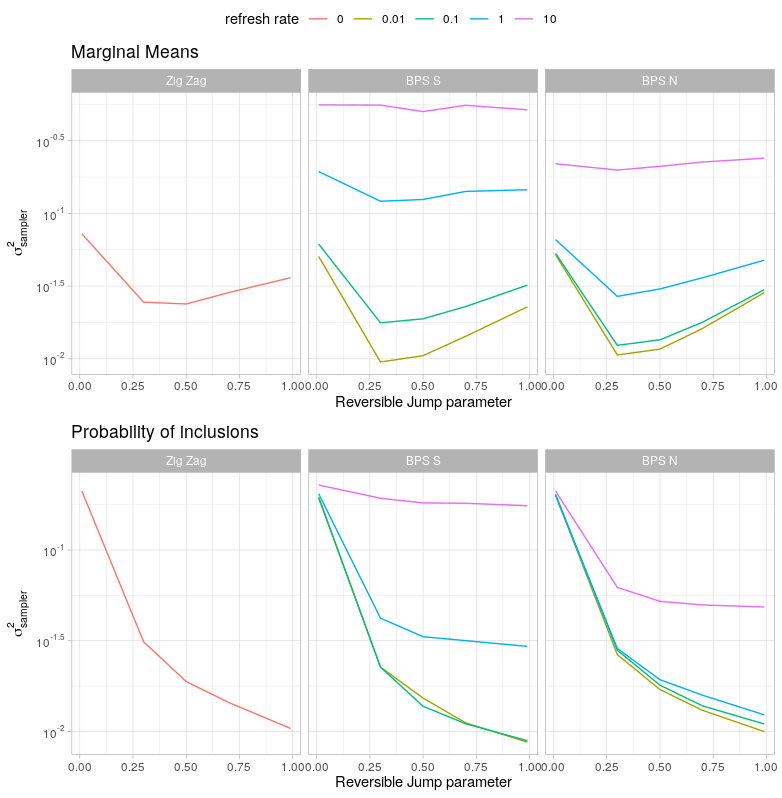}
    \caption{Plots of the Monte Carlo variance for different samplers with varying choices of tuning parameters. The Monte Carlo variance is defined relative to the estimates of the posterior means and probabilities of inclusion. }
    \label{fig:GaussMeans}
\end{figure}

We compare the methods based on the statistical efficiency of the estimates of marginal means and probability of inclusions.  The PDMP methods appear to perform optimally in estimation of the marginal mean when they are able to spend a reasonable amount of time exploring the model space and exploring the parameter space. This can be seen in Figure \ref{fig:GaussMeans} where optimal statistical efficiency for the marginal means occurs when $p_{i,j}$ is the range 0.2 to 0.7 across all samplers. Figure \ref{fig:GaussMeans} also shows that higher values of the reversible jump parameter $p_{i,j}$ gives better estimation of the probabilities of inclusions. Having a high refreshment rate appears to impact BPS with Normal velocity distribution less than BPS with velocities distributed uniformly on the hyper-sphere. This is seen as the magenta curve (refreshment = 10) is lower for BPS N than BPS S in Figure \ref{fig:GaussMeans} across all reversible jump parameters. All BPS methods appear to out-perform Zig-Zag in terms of marginal mean estimation when the refreshment rate is low. 

\section{General implementation details}
Inference in Bayesian model selection relies on expectations with respect to a posterior target distribution, $\pi(\bm{\theta}, \bm{\gamma})$. The parameters are $\bm{\theta}$ while $\bm{\gamma}$ is a vector which indexes the model with elements $\gamma_j = 1$ if the $j$th variable is included and $\gamma_j = 0$ otherwise. The posterior has the form
\begin{align*}
\pi(\bm{\theta}, \bm{\gamma}) \propto L(y^{1:n}|\bm{\theta},\bm{\gamma})\pi_0(\bm{\theta} |\bm{\gamma})\pi_0(\bm{\gamma})
\end{align*}
where $L(y^{1:n}|\bm{\theta}, \bm{\gamma})$ defines a likelihood function for observations $y^{1:n}$, $\pi_0(\bm{\theta} |\bm{\gamma})$ and $\pi_0(\bm{\gamma})$ denote prior distribution for $\bm{\theta}$ and $\bm{\gamma}$. We abuse notation writing $\bm{\theta}_{\bm{\gamma}}$ to denote the sub-vector of $\bm{\theta}$ with only the elements where $\gamma_j=1$. Moreover, we write $\pi(\bm{\theta}_{\bm{\gamma}})$ for $\pi(\bm{\theta} \mid \bm{\gamma})$ where $\pi(\bm{\theta} \mid \bm{\gamma}) = 0$ whenever $|\theta_j|>0$ with corresponding $\gamma_j = 0$. 
\\
When simulating from the reversible jump PDMP sampler there are two types of events: normal events for the PDMP sampler within a model $\bm{\gamma}$ and model jump events. The standard PDMP events are taken with respect to $\pi(\bm{\theta}|\bm{\gamma})$ so rates to sample are given using the usual Bouncy Particle Sampler or Zig-Zag rates on the conditioned model
\begin{align*}
\lambda^{BPS}(s) &= \left(-\bm{v}_{\bm{\gamma}} \cdot \nabla_{\bm{\theta}_{\bm{\gamma}}}\log\pi(\bm{\theta}_{\bm{\gamma}} + sv_{\bm{\gamma}}) \right)^+, \\ 
\lambda^{ZZ}_i(s) &= \left(-v_{i} \nabla_{\theta_{i}}\log\pi(\bm{\theta}_{\bm{\gamma}} + sv_{\bm{\gamma}}) \right)^+, \qquad \text{for }i\in\{i : \gamma_i = 1\}.
\end{align*}
In practice to simulate these events we first bound the rates by a simple function, which for the examples we consider will be linear in time. We then simulate events from a Poisson process with this linear-in-time rate, which can be done exactly, and use thinning to generate the actual events in the PDMP. Derivations of the linear-in-time bounds on the rates that we use are now given, before we give the rates for jumps between models.

\subsection{Rates for logistic and robust regression}
Here we give details on simulating rates for the logistic regression example. Taking a simple independent Gaussian prior the reintroduction rate simplifies to
$$
\beta_{\bm{\gamma}\rightarrow \bm{\gamma}'} = p_{\bm{\gamma}\rightarrow \bm{\gamma}'} \frac{1}{\sqrt{2\pi\sigma^2}}\frac{w}{(1-w)}
$$
where $\bm{\gamma}$ and $\bm{\gamma}'$ are defined as above. The standard PDMP rates are used for $\pi(\bm{\theta}|\bm{\gamma}) = \pi(\bm{\theta}_{\bm{\gamma}})$. So to simplify notation we will assume a fixed dimension and write $\bm{\theta}$ dropping the indexing with ${\bm{\gamma}}$. The log posterior for both logistic and robust regression can be written in the form
$$
-\log\pi(\bm{\theta}) = \sum_{i=1}^n g(e_i) + \frac{\bm{\theta}^T\bm{\theta}}{2\sigma^2}.
$$
For logistic regression $e_i = -\bm{x}_{i}^T\bm{\theta}$ and $g(e_i) = -\log\left(\frac{\exp(y_i)}{1 + \exp(e_i)}\right)$, and for robust regression $e_i = y_i - \bm{x}_{i}^T\bm{\theta}$ and $g(e_i) = -\log\left( \exp(-\frac{1}{2}e_i^2)+ \frac{1}{10}\exp(-\frac{1}{200}e_i^2)\right)$. We consider bounding the event rates for the Bouncy Particle Sampler and ZigZag below.
\\
\\
\textbf{Bouncy Particle Sampler:} Let $f(\bm{\theta} + t\bm{v}) = -\nabla_{\bm{\theta}}\log \pi(\bm{\theta} + t\bm{v})$ the event rate depends on the quantity 
$$
\langle \bm{v}, f(\bm{\theta} + t\bm{v})\rangle = \langle v, \sum_{i=1}^n\bm{x}_i^Tg'(e_i(t)) + \frac{1}{\sigma^2}(\bm{\theta} + t\bm{v})\rangle.
$$
where $g'(e_i(t))$ is the derivative of $g$ evaluated evaluated at $e_i(t) = -\bm{x}_{i}^T(\bm{\theta}+tv)$ for logistic regression and $e_i(t) = y_i-\bm{x}_{i}^T(\bm{\theta}+tv)$ for robust regression. The in-time derivative of this quantity is
$$
\frac{d}{dt}\langle \bm{v}, f(\bm{\theta} + t\bm{v})\rangle = \langle \bm{v}, \sum_{i=1}^n\bm{x}_i^Tg''(e_i(t))\bm{x}_i^T\bm{v} + \frac{1}{\sigma^2}\bm{v}\rangle = \sum_{i=1}^ng''(e_i(t))(\bm{x}_i^T\bm{v})^2+\frac{1}{\sigma^2}\bm{v}^T\bm{v}.
$$
If the in-time derivative can be bounded by a constant we can simulate using linear rates. For logistic regression $g''(e_i) \leq \frac{1}{4}$ and for robust regression $g''(e_i) < 1$. The Bouncy Particle Sampler rate is bounded by the linear rate
$$
\max(0, \langle \bm{v}, f(\bm{\theta} + t\bm{v})\rangle) \leq \max\left(0, \langle \bm{v}, f(\bm{\theta})\rangle\right) + t\left(\frac{1}{\sigma^2}\bm{v}^T\bm{v}+ c\sum_{i=1}^n(\bm{x}_i^T\bm{v})^2\right)
$$
where $c$ is chosen according to the application. Inversion methods for thinning a Poisson process can be used to simulate the events \cite[]{bierkens2018piecewise}. 
\\
\\
\textbf{ZigZag:} Let $f(\bm{\theta} + t\bm{v}) = -\frac{\rm{d}}{\rm{d}\theta_i}\log \pi(\bm{\theta} + t\bm{v})$ the event rate depends on the quantity 
$$
v_i f(\bm{\theta} + t\bm{v})\rangle = v_i\sum_{i=1}^nx_{ij}g'(e_i(t)) + v_i\frac{1}{\sigma^2}(\theta_i + tv_i)
$$
where $g'(e_i(t))$ is defined as in the BPS rate. The in-time derivative of this quantity is
$$
\frac{d}{dt}v_i f(\bm{\theta} + t\bm{v}) = v_i \sum_{i=1}^nx_{ij}g''(e_i(t))\bm{x}_i^T\bm{v} + v_i^2\frac{1}{\sigma^2} = \sum_{i=1}^ng''(e_i(t))x_{ij}v_i\bm{x}_i^T\bm{v}+ v_i^2\frac{1}{\sigma^2}.
$$
Using the same method as before the ZigZag rate is bounded by the linear rate
$$
\max(0, v_i f(\bm{\theta} + t\bm{v})) \leq \max(0, v_i f(\bm{\theta})) + t\left(v_i^2\frac{1}{\sigma^2}+\sum_{i=1}^n|cx_{ij}v_i\bm{x}_i^T\bm{v}|\right)
$$
where $c$ is chosen according to the application.

\subsection{Rates of jumps between models}

Model jump events occur when a parameter, $\theta_i$, hits a hyper-plane $\theta_i = 0$ and with probability $p_{\bm{\gamma}\rightarrow \bm{\gamma}'}$ we jump to model $\bm{\gamma}'$ where $\gamma_i' = 0$. The other type of model jump event occurs when a variable is reintroduced. For each of the deactivated variables ($\gamma_i = 0$), we simulate a time to reintroduce the variable (switching to $\bm{\gamma}'$ where $\bm{\gamma}'_{-i} = \bm{\gamma}_{-i}$ with $\gamma'_i = 1$). The rate to reintroduce the variable in the Bayesian inference problem is
\begin{align*}
\beta_{\bm{\gamma}' \rightarrow \bm{\gamma}} = p_{\bm{\gamma}' \rightarrow \bm{\gamma}}\frac{L(y^{1:n}|\bm{\theta}_{\bm{\gamma}'})\pi_0(\bm{\theta}_{\bm{\gamma}'})\pi_0(\bm{\gamma}')}{L(y^{1:n}|\bm{\theta}_{\bm{\gamma}})\pi_0(\bm{\theta}_{\bm{\gamma}})\pi_0(\bm{\gamma})},
\end{align*}
where often computational savings are possible since the reintroduced variable will be zero $\theta_i = 0$ and it is often the case that $L(y^{1:n}|\bm{\theta}_{\bm{\gamma}'}) = L(y^{1:n}|\bm{\theta}_{\bm{\gamma}})$. In these cases the rate to reintroduce a variable will only depend on the choice of prior. 

Both examples we consider had a Gaussian spike and slab prior, of the form
\begin{align*}
\bm{\theta}_{\bm{\gamma}} &\sim \mathcal{N}(\bm{\mu}_{\bm{\gamma}}, \bm{\Sigma}_{\bm{\gamma}})\\
\bm{\gamma}&\sim w^{\sum_{j=1}^p\gamma_j}(1-w)^{p - \sum_{j=1}^p\gamma_j},
\end{align*}
for a fixed $w$.  The rate to reintroduce the $i$th variable, jumping from model $\bm{\gamma}$ to $\bm{\gamma}'$ where $\bm{\gamma}'_{-i} = \bm{\gamma}_{-i}$ with $\gamma'_i = 1$ and $\gamma_i = 0$ is given by
$$
\beta_{\bm{\gamma}\rightarrow \bm{\gamma}'} = p_{\bm{\gamma}\rightarrow \bm{\gamma}'} \frac{\pi(\bm{\theta}_{\bm{\gamma}} )}{\pi(\bm{\theta}_{\bm{\gamma'}})}\frac{\pi(\bm{\gamma})}{\pi(\bm{\gamma'})} = \frac{\pi(\bm{\theta}_{\bm{\gamma}})}{\pi(\bm{\theta}_{\bm{\gamma'}})}\frac{w}{(1-w)}
$$
Denoting $V_{\bm{\gamma}} = \bm{\Sigma}_{\bm{\gamma}}^{-1}$, the ratio simplifies as,
$$
\frac{\pi(\bm{\theta}_{\bm{\gamma}})}{\pi(\bm{\theta}_{\bm{\gamma'}})} = \frac{|V_{\bm{\gamma}}|^{1/2}\exp\left(-\frac{1}{2}(\bm{\theta}_{\bm{\gamma}} - \mu_{\bm{\gamma}})^TV_{\bm{\gamma}}(\bm{\theta}_{\bm{\gamma}} - \mu_{\bm{\gamma}})\right)}{\sqrt{2\pi}|V_{\bm{\gamma}'}|^{1/2}\exp\left(-\frac{1}{2}(\bm{\theta}_{\bm{\gamma}'}- \mu_{\bm{\gamma}'})^TV_{\bm{\gamma}'}(\bm{\theta}_{\bm{\gamma}'} - \mu_{\bm{\gamma}'})\right)}.
$$
In our examples the prior is independent across components and this ratio simplifies to a constant. As the prior mean is 0 and, if we denote the prior variance for $\theta_i$ for any active covariate $i$ as $\sigma^2$, we have
\[
\beta_{\bm{\gamma}\rightarrow \bm{\gamma}'} =p_{\bm{\gamma}\rightarrow \bm{\gamma}'}\frac{1}{\sqrt{2\pi\sigma^2}}\frac{w}{(1-w)}.
\]

%If $\bm{\mu} = \bm{0}$ or if $\bm{\Sigma} = I_p\sigma^2$ this reduces to a constant rate. If the rate is not constant it can be simulated using the inverse transform method. 

\subsection{P\'{o}lya-Gamma Gibbs sampling for logistic regression}
\label{polyGam}
The Polya-Gamma Gibbs sampling approach is an auxiliary variable approach for Bayesian Logistic regression. A Polya-Gamma random variable $\omega \sim \mathcal{PG}(b, 0)$, $b > 0$, with probability density $p(\omega)$ has the property \citep{polson2013bayesian} that for any $\psi \in \mathbb{R}$ and $a \in \mathbb{R}$
$$
\frac{\exp(\psi)^a}{(1 + \exp(\psi))^b} = 2^{-b}\exp((a-\frac{b}{2})\psi)\int_0^{\infty}\exp(-\omega\frac{\psi^2}{2})p(\omega)\mbox{d}\omega.
$$
Thus the implied conditional distribution for $\psi$, given auxiliary variable $\omega$, is Gaussian. The advantage of this approach is that when updating the model $\bm{\gamma}$ we can integrate over the parameters $\bm{\theta}$ yielding much more efficient moves. The updates for the collapsed Gibbs sampling procedure follow the form:
\begin{enumerate}
    \item[(1)] sample $\bm{\gamma} \sim \bm{\gamma} \mid \bm{\omega}$;
    \item[(2)] sample $\bm{\theta} \sim \bm{\theta} \mid \bm{\omega}, \bm{\gamma}$;
    \item[(3)] sample $\bm{\omega} \sim \bm{\omega} \mid \bm{\theta}, \bm{\gamma}$.
\end{enumerate}
\textbf{Simulation step 1.}
\\
Let $\tilde{\pi}(\bm{\gamma} \mid \bm{\omega})$ be a density proportional to $\pi(\bm{\gamma} \mid \bm{\omega})$ such that
\begin{align*}
\tilde{\pi}(\bm{\gamma} \mid \bm{\omega}) &= \pi_0(\bm{\gamma})\int_{\bm{\theta}_{\bm{\gamma}}} L(y^{1:n}|\bm{\theta}_{\bm{\gamma}},\bm{\omega})\pi_0(\bm{\theta}_{\bm{\gamma}})\mbox{d}\bm{\theta}_{\bm{\gamma}}\\
&= \frac{\pi_0(\bm{\gamma})}{\sqrt{\text{det}(2\pi\sigma^2\bm{I}_{\bm{\gamma}})}}\int_{\bm{\theta}_{\bm{\gamma}}}\prod_{i=1}^n\exp\left((y_i - 0.5)(\bm{X}_{\bm{\gamma}}\bm{\theta}_{\bm{\gamma}})_i -\frac{\omega_i}{2}((\bm{X}_{\bm{\gamma}}\bm{\theta}_{\bm{\gamma}})_i)^2 - \frac{1}{2\sigma^2}\bm{\theta}_{\bm{\gamma}}^T\bm{\theta}_{\bm{\gamma}}\right) \mbox{d}\bm{\theta}_{\bm{\gamma}}\\
&= \pi_0(\bm{\gamma})\sqrt{\frac{\text{det}(2\pi \bm{V}_{\bm{\gamma}})}{\text{det}(2\pi\sigma^2\bm{I}_{\bm{\gamma}})}}\exp(\frac{1}{2}\kappa^T\bm{X}_{\bm{\gamma}}\bm{V}_{\bm{\gamma}}\bm{X}_{\bm{\gamma}}^T\kappa)
\end{align*}
where $(\bm{X}_{\bm{\gamma}}\bm{\theta}_{\bm{\gamma}})_i$ denotes the $i$th element of $\bm{X}_{\bm{\gamma}}\bm{\theta}_{\bm{\gamma}}$, the matrix $\bm{V}_{\bm{\gamma}} = \left(\bm{X}_{\bm{\gamma}}^T\Omega\bm{X}_{\bm{\gamma}} + \frac{1}{\sigma^2}\bm{I}_{\bm{\gamma}}\right)^{-1}$, the column vector $\kappa = y^{1:n} - 0.5$ and $\Omega = \text{diag}(\omega_1, ..., \omega_n)$. 
\\
The update for $\bm{\gamma}$ is taken by updating component-wise from the conditionals $\gamma_j \mid \bm{\gamma}_{(-j)}, \bm{\omega}$ where $\bm{\gamma}_{(-j)} = (\gamma_1, ..., \gamma_{j-1}, \gamma_{j+1}, ..., \gamma_p)$. Such a proposal can be implemented using the relationship \citep{Chipman2001}
$$
Pr(\gamma_j = 1 \mid \bm{\gamma}_{(-j)}, \bm{\omega}) = \frac{\tilde{\pi}(\gamma_j=1  \mid \bm{\gamma}_{(-j)}, \bm{\omega})}{\tilde{\pi}(\gamma_j=0  \mid \bm{\gamma}_{(-j)}, \bm{\omega})}\left(1 + \frac{\tilde{\pi}(\gamma_j=1  \mid \bm{\gamma}_{(-j)}, \bm{\omega})}{\tilde{\pi}(\gamma_j=0  \mid \bm{\gamma}_{(-j)}, \bm{\omega})}\right)^{-1}
$$
\\
\textbf{Simulation step 2.}
\\
The conditional for $\bm{\theta}_{\bm{\gamma}}$ is 
$$
\bm{\theta}_{\bm{\gamma}} | \bm{\omega}, \bm{\gamma}  \sim \mathcal{N}(\bm{m}_{\bm{\gamma}}, \bm{\Sigma}_{\gamma})
$$
where $\bm{\Sigma}_{\bm{\gamma}} = (\bm{X}_{\bm{\gamma}}^T\Omega\bm{X}_{\bm{\gamma}} + \frac{1}{\sigma^2}\bm{I}_{\bm{\gamma}})$ and $\bm{m}_{\bm{\gamma}} = \bm{\Sigma}_{\bm{\gamma}}\bm{X}_{\bm{\gamma}}\kappa$.
\\
\\
\textbf{Simulation step 3.}
\\
The conditional for $\bm{\omega}$ is $
\omega_i | \bm{\theta}, \bm{\gamma} \sim \mathcal{PG}\left(0, (\bm{X}_{\bm{\gamma}}\bm{\theta}_{\bm{\gamma}})_i\right)
$ for $i = 1,..., n$.

\section{Computation of relative efficiencies in Section 5.1}
\label{supp:RE}
In order to compute the relative efficiency we need an estimate of the statistical efficiency \eqref{eq:mse}. We estimate this quantity using a reference estimate $q$ from an independent 6-hour run of the Gibbs sampling method for each combination of $n$, $p$ and Scenario in Tables 1-3 and the results of Figure \ref{fig:my_label1}. For the results in Tables 1-3 the quantities of interest are the estimation of the posterior marginal inclusion probabilities $\pi(|\theta_j| > 0)$ (PPI) and the marginal posterior means $\mathbb{E}[\theta_j]$ (Mean). These two quantities allowed us to see how efficient the sampler was in terms of exploring both the parameter and model space. For the simulations in the subsampling comparison (Figure \ref{fig:my_label1}) the quantity of interest was the posterior mean conditioned on being in model $\bm{\gamma} = (1,1,0,0,...,0))$. We estimate the mean square error of these terms by running 100 independent runs of each algorithm and comparing to the corresponding long Gibbs run. Methods used in Tables 1-3 were initialised at the maximum a posterior estimate for a model with active variables chosen from an initial LASSO fit. For the subsampling comparison, methods were initialised at the location of the control variate (the maximum a posterior estimate with using the true nonzero variables). For each algorithm in the simulations of Tables 1-3 we use a computational budget of $10^6$ iterations with a maximum run time of 2 minutes. For the simulations in the subsampling comparison we used a computational budget of $10^6$ iterations with a maximum run time of 6 minutes. Algorithms were then compared on the basis of relative computational efficiency using RE or relative efficiency per iteration using RSE. An iteration for the Gibbs sampler is considered to be a full update of all parameters (i.e. one run of all steps in Section \ref{polyGam}) whereas an iteration of the PDMP methods is considered to be one simulated event time.

\end{document}